\newcommand{\maps}{\colon}    
\newcommand{\R}{{\mathbb R}}  
\newcommand{\C}{{\mathbb C}}  
\renewcommand{\H}{{\mathbb H}}  
\renewcommand{\O}{{\mathbb O}}  
\newcommand{\K}{{\mathbb K}}  
\newcommand{\Z}{{\mathbb Z}}  
\renewcommand{\Re}{\mathrm{Re}} 
\renewcommand{\Im}{\mathrm{Im}} 
\newcommand{\Retr}{\Re \; \tr} 
\newcommand{\h}{\mathfrak{h}} 
\newcommand{\tr}{{\mathrm{tr}}} 
\newcommand{\SO}{{\rm SO}}    
\newcommand{\Spin}{{\rm Spin}}    
\newcommand{\g}{{\mathfrak{g}}}  
\newcommand{\tri}{\operatorname{{{\rm tri}}}} 
\newcommand{\End}{{\rm End}} 
\newcommand{\Hom}{{\rm Hom}} 
\newcommand{\Cliff}{{\rm Cliff}}    
\newcommand{\inclusion}{\hookrightarrow}
\newcommand{\iso}{\cong} 
\newcommand{\tensor}{\otimes} 
\newcommand{\half}{\frac{1}{2}} 
\newcommand{\fourth}{\frac{1}{4}} 
\newcommand{\psibar}{\overline{\psi}} 
\newcommand{\chibar}{\overline{\chibar}} 
\newcommand{\define}[1]{{\bf #1}}
\newtheorem{thm}{Theorem}
\newtheorem{prop}[thm]{Proposition}
        \newcommand{\be}{\begin{equation}}
        \newcommand{\ee}{\end{equation}}
        \newcommand{\ba}{\begin{eqnarray}}
        \newcommand{\ea}{\end{eqnarray}}
        \newcommand{\ban}{\begin{eqnarray*}}
        \newcommand{\ean}{\end{eqnarray*}}
        \newcommand{\barr}{\begin{array}}
        \newcommand{\earr}{\end{array}}
\title{Division Algebras and Supersymmetry I}
\author{John C.\ Baez and John Huerta \\
\\
Department of Mathematics \\
University of California \\
Riverside, CA 92521 USA 
}
\date{December 29, 2009}
\begin{document}
\maketitle

\begin{abstract}
\noindent
Supersymmetry is deeply related to division algebras.  For example,
nonabelian Yang--Mills fields minimally coupled to massless spinors are
supersymmetric if and only if the dimension of spacetime is 3, 4, 6 or
10.  The same is true for the Green--Schwarz superstring.  In both
cases, supersymmetry relies on the vanishing of a certain trilinear
expression involving a spinor field.  The reason for this, in turn, is
the existence of normed division algebras in dimensions two less,
namely 1, 2, 4 and 8: the real numbers, complex numbers, quaternions 
and octonions.  Here we provide a self-contained account of how this works.  
\end{abstract}

\section{Introduction} \label{sec:intro}

There is a deep relation between supersymmetry and the four normed division
algebras: the real numbers $\R$, the complex numbers $\C$, the 
quaternions $\H$, and the octonions $\O$.  This is visible in 
the study of superstrings, supermembranes, and supergravity, 
but perhaps most simply in supersymmetric Yang--Mills theory.
In any dimension, we may consider a Yang--Mills field coupled to
a massless spinor transforming in the adjoint representation of 
the gauge group.  These fields are described by this Lagrangian:
\[ L = 
-\fourth \langle F, F \rangle + 
\half \langle \psi, \slashed{D}_A \psi \rangle. \]
Here $A$ is a connection on a bundle with semisimple gauge group $G$,
$F$ is the curvature of $A$, $\psi$ is a $\g$-valued spinor field, and
$\slashed{D}_A$ is the covariant Dirac operator associated with $A$.
It is well-known that this theory is supersymmetric if and only if the
dimension of spacetime is $3,4,6,$ or $10$.  Our goal here is to present
a self-contained proof of the `if' part of this result, based on the theory 
of normed division algebras.  

This result goes back to the work of Brink, Schwarz, and
Sherk~\cite{BrinkSchwarzScherk} and others. The book by Green, Schwarz
and Witten~\cite{GreenSchwarzWitten} contains a standard proof based
on the properties of Clifford algebras in various dimensions.  But
Evans~\cite{Evans} has shown that the supersymmetry of $L$ in
dimension $n+2$ implies the existence of a normed division algebra of
dimension $n$.  Conversely, Kugo and Townsend~\cite{KugoTownsend}
showed how spinors in dimension 3, 4, 6, and 10 derive special
properties from the normed division algebras $\R$, $\C$, $\H$ and
$\O$.  They formulated a supersymmetric model in 6 dimensions using
the quaternions, $\H$. They also speculated about a similar formalism
in 10 dimensions using the octonions, $\O$.

Shortly after Kugo and Townsend's work, Sudbery~\cite{Sudbery} used
division algebras to construct vectors, spinors and Lorentz groups in
Minkowski spacetimes of dimensions 3, 4, 6, and 10.  He then refined
his construction with Chung~\cite{ChungSudbery}, and with
Manogue~\cite{ManogueSudbery} he used these ideas to give an
octonionic proof of the supersymmetry of the above Lagrangian in
dimension 10.  This proof was later simplified by Manogue, Dray and
Janesky~\cite{DrayJaneskyManogue}. In the meantime,
Schray~\cite{Schray} applied the same tools to the superparticle.

All this work has made it quite clear that normed division algebras
explain why the above theory is supersymmetric in dimensions 3, 4, 6,
and 10.  Technically, what we need to check for supersymmetry is that
$\delta L$ is a total divergence with respect to the supersymmetry
transformation
\begin{eqnarray*}
    \delta A    & = & \epsilon \cdot \psi \\
    \delta \psi & = & \textstyle{\half} F \epsilon
\end{eqnarray*}
for any constant spinor field $\epsilon$.  (We explain the notation
here later; we assume no prior understanding of supersymmetry or 
normed division algebras.)  A calculation that works in any dimension 
shows that
\[ \delta L = \tri \psi + \mbox{divergence} \]
where $\tri \psi$ is a certain expression depending in a trilinear 
way on $\psi$ and linearly on $\epsilon$.  

So, the marvelous fact that needs to be understood is that $\tri \psi
= 0$ in dimensions 3, 4, 6, and 10, thanks to special properties of the
normed division algebras $\R$, $\C$, $\H$ and $\O$.  Indeed, this fact is 
responsible for supersymmetry, not only for Yang--Mills fields in
these dimensions, but also for superstrings!  The same term $\tri
\psi$ shows up as the obstruction to supersymmetry in the
Green--Schwarz Lagrangian for classical superstrings
\cite{GreenSchwarz,GreenSchwarzWitten}.  So, the vanishing of this
term deserves to be understood: clearly, simply, and in as many ways
as possible.

Unfortunately, many important pieces of the story are scattered
throughout the literature.  The treatment of Deligne and Freed
\cite{Deligne} is self-contained, and it uses normed division
algebras, but it does not use `purely equational reasoning': it proves
$\tri \psi = 0$ by first showing that the double cover of the Lorentz
group acts transitively on the set of nonzero spinors in dimensions 3,
4, 6, and 10.  While this geometrical argument is beautiful and
insightful, a purely equational approach has its own charm.  The line
of work carried out by Fairlie, Manogue, Sudbery, Dray, and collaborators
\cite{DrayJaneskyManogue, FairlieManogue, ManogueSudbery, Schray} has
shown that the equation $\tri \psi = 0$ can be derived from the
complete antisymmetry of another trilinear expression, the
`associator'
\[               [a,b,c] = (ab)c - a(bc)   \]
in the normed division algebra.  Our desire here is to merely present 
this argument as clearly as we can.

So, here we present an equational proof that $\tri \psi = 0$ in
dimensions 3, 4, 6, and 10, based on the complete antisymmetry of the
associator for the normed division algebras $\K = \R$, $\C$, $\H$ and 
$\O$.  In Section \ref{sec:divalg} we review the properties of normed
division algebras that we will need.  In Section
\ref{sec:intertwiners} we start by recalling how to interpret vectors
as $2 \times 2$ hermitian matrices with entries in $\K$, and spinors
as elements of $\K^2$.  We then use this language to describe the
basic operations involving vectors, spinors and scalars.
These include an operation that takes two spinors $\psi$ and $\phi$
and forms a vector $\psi \cdot \phi$, and an operation that takes a
vector $A$ and a spinor $\psi$ and forms a spinor $A \psi$.
In Section \ref{sec:fundamental} we prove the fundamental identity
that holds only in Minkowski spaces of dimensions 3, 4, 6 and 10:
\[               (\psi \cdot \psi) \psi = 0 . \]
Following Schray \cite{Schray}, we call this the `3-$\psi$'s rule'.  In 
Section \ref{sec:superalgebra} we introduce a little superalgebra, and 
explain why we should treat $\K$ as an `odd', or `fermionic', super vector
space.  In Section \ref{sec:sym} we formulate pure super-Yang--Mills
theory in terms of normed division algebras, completely avoiding the
use of gamma matrices.  We explain how the term $\tri \psi$ arises as
the obstruction to supersymmetry in this theory.  Finally, we use the
3-$\psi$'s rule to prove that $\tri \psi = 0$ in dimensions 3,
4, 6 and 10.

\section{Normed Division Algebras} \label{sec:divalg}

By a classic theorem of Hurwitz~\cite{Hurwitz}, there are only four
normed division algebras: the real numbers, $\R$, the complex numbers,
$\C$, the quaternions, $\H$, and the octonions, $\O$.  These algebras
have dimension 1, 2, 4, and 8.  For an overview of this subject,
including a Clifford algebra proof of Hurwitz's theorem, see
\cite{Baez:Octonions}.  Here we introduce the bare minimum of material
needed to reach our goal.

A \define{normed division algebra} $\K$ is a (finite-dimensional,
possibly nonassociative) real algebra equipped with a multiplicative
unit 1 and a norm $| \cdot |$ satisfying:
\[ |ab| = |a| |b|  \]
for all $a, b \in \K$.  Note this implies that $\K$ has no zero divisors.
We will freely identify $\R 1 \subseteq \K$ with $\R$.

In all cases, this norm can be defined using conjugation. Every normed
division algebra has a \define{conjugation} operator---a linear
operator $* \maps \K \to \K$ satisfying
\[ a^{**} = a, \quad (ab)^* = b^* a^* \]
for all $a,b \in \K$.   Conjugation lets us decompose each element of
$\K$ into real and imaginary parts, as follows:
\[ \Re(a) = \frac{a + a^*}{2}, \quad \Im(a) = \frac{a - a^*}{2}. \]
Conjugating changes the sign of the imaginary part and leaves the real part
fixed. We can write the norm as
\[ |a| = \sqrt{a a^*} = \sqrt{a^* a}. \]
This norm can be polarized to give an inner product on $\K$:
\[ (a, b) = \Re(a b^*) = \Re(a^* b). \]

The algebras $\R$, $\C$ and $\H$ are associative.  The octonions $\O$
are not.  Yet they come close: the subalgebra generated by any two
octonions is associative.  Another way to express this fact uses the
\define{associator}:
\[ [ a, b, c ] = (ab)c - a(bc), \]
a trilinear map $\K \tensor \K \tensor \K \to \K$.  A theorem due to
Artin \cite{Schafer} states that for any algebra, the subalgebra
generated by any two elements is associative if and only if the
associator is alternating (that is, completely antisymmetric in its
three arguments).  An algebra with this property is thus called
\define{alternative}.  The octonions $\O$ are alternative, and 
so of course are $\R$, $\C$ and $\H$: for these three the associator
simply vanishes!  

In what follows, our calculations make heavy use of the fact that
all four normed division algebras are alternative.  Besides this, the
properties we require are:

\begin{prop}
The associator changes sign when one of its entries is conjugated.
\end{prop}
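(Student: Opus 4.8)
The plan is to reduce the statement to a single entry, exploiting the alternating property of the associator just recorded above. The crucial observation is that conjugation differs from negation only by a real shift: since $\Re(a) = (a + a^*)/2$, we have $a^* = 2\Re(a) - a$, and $\Re(a)$ is a real multiple of the unit $1$. Substituting this into the first argument and using that the associator is trilinear gives
\[ [a^*, b, c] = 2\Re(a)\,[1, b, c] - [a, b, c]. \]
I would then note that the associator annihilates the unit: $[1, b, c] = (1 \cdot b)c - 1 \cdot (bc) = bc - bc = 0$, because left multiplication by $1$ is the identity. Hence the first term vanishes and $[a^*, b, c] = -[a, b, c]$, which is exactly the claim when the conjugated entry is the first one.

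To obtain the result for the other two entries, I would transport this special case across slots using complete antisymmetry. For the second argument, for instance,
\[ [a, b^*, c] = -[b^*, a, c] = [b, a, c] = -[a, b, c], \]
where the outer equalities are instances of antisymmetry and the middle one applies the case already established to the first slot; the third slot is handled identically. Thus conjugating any single entry flips the sign of the associator.

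I do not expect a genuine obstacle here: the whole proposition rests on the two elementary facts that $a + a^*$ is real and that the associator kills the unit $1$. The only point deserving slight care is ensuring the reduction from one slot to the others is valid, and this is guaranteed precisely by the alternating property, which is why it is natural to prove the first-slot case directly and then permute.
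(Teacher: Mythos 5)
Your proof is correct and rests on the same essential facts as the paper's: real elements (here, the unit $1$) annihilate the associator, and conjugation differs from negation by a real shift. The paper packages this slightly differently---it observes that $[a,b,c]=0$ whenever an entry is real and concludes $[a,b,c]=[\Im(a),\Im(b),\Im(c)]$, handling all three slots at once with no need for the antisymmetry-based permutation step---but the mathematical content is the same.
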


\begin{proof}
Since the subalgebra generated by any two elements is associative, and
real elements of $\K$ lie in every subalgebra, $[a,b,c] = 0$ if any
one of $a,b,c$ is real.  It follows that $[a, b, c] = 
[ \Im(a), \Im(b), \Im(c)]$, which yields the desired result.
\end{proof}

\begin{prop}
\label{prop:imaginaryassociator} 
The associator is purely imaginary. 
\end{prop}

\begin{proof}
Since $(ab)^* = b^* a^*$, a calculation shows $[a,b,c]^* = -[c^*,b^*,a^*]$.  
By alternativity this equals $[a^*,b^*,c^*]$, which in
turn equals $-[a,b,c]$ by the above proposition.  So, $[a,b,c]$ is
purely imaginary.
\end{proof}

For any square matrix $A$ with entries in $\K$, we define its
\define{trace} $\tr(A)$ to be the sum of its diagonal entries. This
trace lacks the usual cyclic property, because $\K$ is noncommutative,
so in general $\tr(AB) \neq \tr(BA)$.  Luckily, taking the real part
restores this property:

\begin{prop}
Let $a$, $b$, and $c$ be elements of $\K$. Then
\[ \Re((ab)c) = \Re(a(bc)) \]
and this quantity is invariant under cyclic permutations of $a$, $b$, and
$c$.
\end{prop}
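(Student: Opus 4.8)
The plan is to deduce both claims from the fact, already established in Proposition~\ref{prop:imaginaryassociator}, that the associator is purely imaginary, together with one auxiliary identity about the real part of a product.

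First I would dispose of the equality $\Re((ab)c) = \Re(a(bc))$. Since $[a,b,c] = (ab)c - a(bc)$ is purely imaginary, its real part vanishes, so applying $\Re$ to the defining equation of the associator gives $\Re((ab)c) - \Re(a(bc)) = 0$ at once. This lets me speak of the common value $\Re((ab)c)$ without ambiguity about the bracketing.

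The heart of the matter is the auxiliary identity $\Re(xy) = \Re(yx)$ for all $x, y \in \K$; this is where noncommutativity (and nonassociativity) forbids simply invoking the cyclic property of an ordinary trace, so I expect this to be the main obstacle. To prove it, I would decompose $x$ and $y$ into real and imaginary parts. The products $\Re(x)\Im(y)$ and $\Im(x)\Re(y)$ are purely imaginary and so contribute nothing to the real part, while $\Re(x)\Re(y)$ is manifestly symmetric in $x$ and $y$; hence the claim reduces to $\Re(uv) = \Re(vu)$ for purely imaginary $u = \Im(x)$ and $v = \Im(y)$. For such elements $u^* = -u$ and $v^* = -v$, so $(uv)^* = v^* u^* = vu$, and since conjugation fixes the real part we conclude $\Re(uv) = \Re((uv)^*) = \Re(vu)$.

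Finally I would assemble the cyclic invariance. Applying the auxiliary identity with $x = ab$ and $y = c$ gives $\Re((ab)c) = \Re(c(ab))$, and then the first equality (the freedom to reassociate under $\Re$) gives $\Re(c(ab)) = \Re((ca)b)$. Thus the value is unchanged under the cyclic permutation $(a,b,c) \mapsto (c,a,b)$; iterating this single cyclic shift exhausts all cyclic permutations of the three arguments, completing the proof.
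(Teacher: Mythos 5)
Your proof is correct, and its skeleton coincides with the paper's: both deduce $\Re((ab)c)=\Re(a(bc))$ from the purely imaginary associator, both reduce cyclic invariance to the single identity $\Re(xy)=\Re(yx)$, and both then combine that identity with the reassociation freedom to cycle the arguments. The one genuine difference is how you prove $\Re(xy)=\Re(yx)$. The paper gets it from the symmetry of the polarized inner product: since $(a,b)=\Re(ab^*)=\Re(a^*b)$ and $(a,b)=(b,a)$, one has $\Re(ab^*)=\Re(b^*a)$, and substituting $b^*$ for $b$ finishes it. You instead split $x$ and $y$ into real and imaginary parts, discard the cross terms (real scalar times imaginary element is imaginary, hence has vanishing real part), and handle the imaginary--imaginary term via $(uv)^*=v^*u^*=vu$ together with the fact that conjugation fixes real parts. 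Your route is slightly more elementary in that it never invokes the norm or the polarization identity --- it uses only that conjugation is an anti-automorphism inducing the real/imaginary splitting --- so it would apply verbatim in any algebra with such an involution; the paper's version is shorter given that the inner product and its symmetry have already been set up two pages earlier. Either way the lemma and its use are the same, so this is a cosmetic rather than structural divergence.
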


\begin{proof} 
Proposition~\ref{prop:imaginaryassociator} implies
that $\Re((ab)c) = \Re(a(bc))$. For the cyclic property, it 
then suffices to prove $\Re(ab) = \Re(ba)$.  Since $(a,b) = (b,a)$ and 
the inner product is defined by $(a, b) = \Re(ab^*) = \Re(a^* b)$, we see:
\[ \Re(ab^*) = \Re(b^*a). \]
The desired result follows upon substituting $b^*$ for $b$. 
\end{proof}

\begin{prop}
\label{prop:realtrace}
Let $A$, $B$, and $C$ be $k \times \ell$, $\ell \times m$ and
$m \times k$ matrices with entries in $\K$. Then
\[ \Retr((AB)C) = \Retr(A(BC)) \]
and this quantity is invariant under cyclic permutations of $A$, $B$,
and $C$.  We call this quantity the \define{real trace} $\Retr(ABC)$.
\end{prop}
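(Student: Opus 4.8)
The plan is to reduce the matrix statement to the scalar statement just proved (the previous proposition), by writing everything out entrywise and taking real parts. The key observation is that $\Retr$ of a product of matrices is, after expanding the matrix multiplication, a sum of real traces of scalar products, and the scalar identity $\Re((ab)c)=\Re(a(bc))$ together with its cyclic invariance should handle each term.

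First I would expand both sides in terms of matrix entries. For the associativity claim $\Retr((AB)C)=\Retr(A(BC))$, I would write the $(i,i)$ diagonal entry of $(AB)C$ as a sum over the intermediate indices, namely $\sum_{j,k} (A_{ij}B_{jk})C_{ki}$, and similarly the diagonal entry of $A(BC)$ as $\sum_{j,k} A_{ij}(B_{jk}C_{ki})$. Summing over $i$ and taking real parts, the two expressions agree term by term because $\Re((A_{ij}B_{jk})C_{ki}) = \Re(A_{ij}(B_{jk}C_{ki}))$ by the previous proposition applied to the scalars $a=A_{ij}$, $b=B_{jk}$, $c=C_{ki}$. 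This justifies writing the unambiguous quantity $\Retr(ABC)$.

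Next I would establish cyclic invariance, say $\Retr(ABC)=\Retr(BCA)$. Here I would expand $\Retr(A(BC))$ as $\sum_{i,j,k}\Re\bigl(A_{ij}(B_{jk}C_{ki})\bigr)$, and observe that for each fixed triple of indices the three scalars $A_{ij}$, $B_{jk}$, $C_{ki}$ appear in cyclic order. Applying the cyclic invariance of the scalar real trace, $\Re(A_{ij}(B_{jk}C_{ki})) = \Re(B_{jk}(C_{ki}A_{ij}))$, and then recognizing the right-hand side as a diagonal term of $B(CA)$ after relabeling the summation indices, gives the matrix cyclic property. One must check that the index bookkeeping matches the dimensions $k\times\ell$, $\ell\times m$, $m\times k$ so that the cyclic product $BCA$ (of sizes $\ell\times m$, $m\times k$, $k\times\ell$) is again square and the indices close up correctly.

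The main obstacle, such as it is, is purely notational rather than mathematical: keeping the summation indices and matrix dimensions straight while avoiding a clash between the scalar-level use of the previous proposition and the matrix-level conclusion. Since $\K$ is nonassociative, I must be careful never to silently drop parentheses in the scalar products and to invoke the previous proposition explicitly at each step; once that discipline is maintained, the proof is a direct entrywise reduction with no genuine difficulty.
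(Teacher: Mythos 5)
Your proposal is correct and is exactly the argument the paper intends: the paper's own proof is the one-line remark that the result ``follows from the previous proposition and the definition of the trace,'' which is precisely the entrywise expansion and termwise application of the scalar identity that you carry out in detail. (Only a cosmetic quibble: you reuse $k$ both as a matrix dimension and as a summation index, which you would want to fix in a final write-up.)
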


\begin{proof}
This follows from the previous proposition and the definition of 
the trace.
\end{proof}

The reader will have noticed three trilinears in this section: the
associator $[a, b, c]$, the real part $\Re((ab)c)$, and the real trace
$\Retr(ABC)$.  This is no coincidence, as they all relate to the star
of the show, $\tri \psi$. In fact:
\[ \tri \psi = \Retr(\psi^\dagger (\epsilon \cdot \psi) \psi). \]
for some suitable matrices $\psi^\dagger$, $\epsilon \cdot \psi$ and
$\psi$.  Of course, we have not yet said how to construct these. We
turn to this now.

\section{Vectors, Spinors and Intertwiners} \label{sec:intertwiners}

It is well-known \cite{Baez:Octonions, KugoTownsend, Sudbery} that given a
normed division algebra $\K$ of dimension $n$, one can construct
$(n+2)$-dimensional Minkowski spacetime as the space of $2 \times 2$ hermitian
matrices with entries in $\K$, with the determinant giving the Minkowski
metric.  Spinors can then be described as elements of $\K^2$.  Our goal here is
to provide self-contained proofs of these facts, and then develop all the basic
operations involving vectors, spinors and scalars using this language.

To begin, let $\K[m]$ denote the space of $m \times m$ matrices with
entries in $\K$. Given $A \in \K[m]$, define its \define{hermitian adjoint}
$A^\dagger$ to be its conjugate transpose:
\[ A^\dagger = (A^*)^T. \]
We say such a matrix is \define{hermitian} if $A = A^\dagger$. Now take the
$2 \times 2$ hermitian matrices:
\[ 
\h_2(\K) = \left\{ 
\left( 
\begin{array}{c c} 
	t + x & y     \\
	y^*   & t - x \\
\end{array}
\right) 
\; : \;
t, x \in \R, \; y \in \K
\right\}.
\]
This is an $(n + 2)$-dimensional real vector space. Moreover, the
usual formula for the determinant of a matrix gives the Minkowski norm
on this vector space:
\[ 
-\det 
\left( 
\begin{array}{c c} 
	t + x & y     \\
	y^*   & t - x \\
\end{array}
\right) 
= - t^2 + x^2 + |y|^2.
\]
We insert a minus sign to obtain the signature $(n+1, 1)$. Note this
formula is unambiguous even if $\K$ is noncommutative or nonassociative.   

It follows that $\Spin(n + 1, 1)$, the double cover of the Lorentz group
$\SO_0(n+1,1)$, acts on $\h_2(\K)$ via determinant-preserving linear transformations.
Since this is the `vector' representation, we will often call $\h_2(\K)$ simply $V$.
The Minkowski metric
\[           g \maps V \otimes V \to \R  \]
is given by
\[           g(A,A) = -\det(A)  .\]
There is also a nice formula for the inner product of two different
vectors.  This involves the \define{trace reversal} of $A \in \h_2(\K)$,
introduced by Schray \cite{Schray} and defined as follows:
\[ \tilde{A} = A - (\tr A) 1. \]
Note we indeed have $\tr(\tilde{A}) = -\tr(A)$.  Also note that
\[ A =
\left( 
\begin{array}{c c} 
	t + x & y     \\
	y^*   & t - x \\
\end{array}
\right)
\qquad \implies \qquad \tilde{A} =
\left( 
\begin{array}{c c} 
	-t + x & y     \\
	y^*    & -t - x \\
\end{array}
\right)
\]
so trace reversal is really \emph{time reversal}.  Moreover:

\begin{prop}
\label{prop:metric}
For any vectors $A,B \in V = \h_2(K)$, we have
\[       A \tilde{A} = \tilde{A} A = - \det(A) 1 \]
and
\[  \frac{1}{2} \Retr(A \tilde{B}) = \frac{1}{2} \Retr(\tilde{A} B) = g(A,B)
\]
\end{prop}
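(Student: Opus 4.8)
The plan is to prove both identities by direct computation in the explicit parametrization of $\h_2(\K)$, the key point being that every product which occurs involves only \emph{two} elements of $\K$ at a time. Because no threefold product of $\K$-elements ever appears, associativity is never invoked; the only algebraic facts I will need are that $\R$ is central in $\K$ (real scalars commute with everything), that $y y^* = y^* y = |y|^2$, and that $\Re(ab) = \Re(ba)$. I will flag at the outset that this is exactly why noncommutativity and nonassociativity cause no trouble, so that the remainder is bookkeeping.

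For the first identity, write $A \in V$ with real diagonal entries $t+x$ and $t-x$ and off-diagonal entries $y,\,y^*$, so that $\tilde A$ has diagonal entries $-t+x$ and $-t-x$ and the same off-diagonal entries. Multiplying $A \tilde A$ entry by entry, the two off-diagonal entries of the product vanish: the upper-right entry is $(t+x)\,y - y\,(t+x)$, which is zero precisely because $t+x \in \R$ is central, and similarly for the lower-left entry. Each diagonal entry equals $|y|^2 - (t+x)(t-x) = -t^2 + x^2 + |y|^2$, which is $-\det(A)$ by the determinant formula. Hence $A\tilde A = -\det(A)\,1$, and the computation of $\tilde A A$ is word-for-word the same, again giving $-\det(A)\,1$.

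For the second identity I would introduce a second vector $B$ with real diagonal entries $s+w,\,s-w$ and off-diagonal entries $z,\,z^*$. Summing the two diagonal entries of $A\tilde B$ gives $\tr(A\tilde B) = (t+x)(-s+w) + (t-x)(-s-w) + y z^* + y^* z$; the two real products collapse to $-2ts + 2xw$, while $\Re(y z^*) + \Re(y^* z) = 2(y,z)$ by the definition $(y,z) = \Re(yz^*) = \Re(y^*z)$. Thus $\tfrac12\Retr(A\tilde B) = -ts + xw + (y,z)$, which is exactly the polarization of the quadratic form $g(A,A) = -\det(A) = -t^2 + x^2 + |y|^2$, i.e. $g(A,B)$. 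For the companion equality I would note that $\Retr(\tilde A B) = \Retr(B\tilde A)$ by the two-factor cyclic property $\Retr(MN) = \Retr(NM)$, which follows from $\Re(ab) = \Re(ba)$ exactly as in the proof of Proposition~\ref{prop:realtrace}; the right-hand side is the previous computation with $A$ and $B$ interchanged, giving the same symmetric value $2g(A,B)$.

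There is no deep obstacle here, so the thing worth stating clearly is the single genuinely subtle point already emphasized above: every step that appears to require commutativity in fact only pushes a \emph{real} (central) diagonal entry past a $\K$-element, and every $\K$-valued product is binary, so the associator never enters; the only place where a potential discrepancy between $yz^*$ and its reverse is smoothed over is inside $\Re$, where $\Re(ab)=\Re(ba)$ applies. Making this explicit is what turns the otherwise routine matrix arithmetic into a valid argument over a possibly nonassociative $\K$.
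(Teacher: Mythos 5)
Your proof is correct. For the first identity you carry out exactly what the paper's ``quick calculation'' leaves implicit: the entrywise multiplication, together with the two observations that make it legitimate over a nonassociative $\K$ --- only real (central) scalars are ever commuted past elements of $\K$, and no triple product of $\K$-elements occurs. Where you genuinely diverge is in the second identity. The paper deduces it from the first: taking $\frac{1}{2}\Retr$ of $A\tilde{A} = -\det(A)1$ gives $\frac{1}{2}\Retr(A\tilde{A}) = g(A,A)$, and then the polarization identity (two symmetric bilinear forms with the same quadratic form coincide) promotes this diagonal equality to $\frac{1}{2}\Retr(A\tilde{B}) = g(A,B)$. You instead compute $\tr(A\tilde{B})$ directly in coordinates and recognize $-ts + xw + (y,z)$ as the polarization of $-\det$. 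The paper's route is shorter and avoids introducing coordinates for $B$, but it tacitly relies on $(A,B)\mapsto \Retr(A\tilde{B})$ being \emph{symmetric} in $A$ and $B$ --- without that, polarization does not apply --- and that symmetry rests on the same fact $\Re(ab)=\Re(ba)$ (equivalently the two-factor cyclic property of $\Retr$) that you invoke at the end to handle $\Retr(\tilde{A}B)$. Your version makes this dependence explicit at the cost of a few extra lines of arithmetic; both arguments are sound.
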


\begin{proof}
We check the first equation by a quick calculation.  Taking the real
trace and dividing by 2 gives
\[      \frac{1}{2} \Retr(A \tilde{A}) = \frac{1}{2}
\Retr(\tilde{A}A) = -\det(A) = g(A,A). \]
Then we use the polarization identity, which says that two symmetric
bilinear forms that give the same quadratic form must be equal.
\end{proof}

Next we consider spinors.  As real vector spaces, the spinor
representations $S_+$ and $S_-$ are both just $\K^2$.
However, they differ as representations of $\Spin(n+1, 1)$. To
construct these representations, we begin by defining ways for vectors
to act on spinors:
\[ \begin{array}{cccl} 
	\gamma \maps & V \tensor S_+ & \to     & S_- \\
	             & A \tensor \psi     & \mapsto & A\psi.
   \end{array} \]
and 
\[ \begin{array}{cccl} 
	\tilde{\gamma} \maps & V \tensor S_- & \to     & S_+ \\
	                     & A \tensor \psi     & \mapsto & \tilde{A}\psi.
\end{array} \]
We can also think of these as maps that send elements of $V$ to linear 
operators:
\[ \begin{array}{cccl}
	\gamma \maps         & V & \to & \Hom(S_+, S_-), \\
	\tilde{\gamma} \maps & V & \to & \Hom(S_-, S_+).
\end{array} \]
Here a word of caution is needed: since $\K$ may be nonassociative,
$2 \times 2$ matrices with entries in $\K$ cannot be identified with 
linear operators on $\K^2$ in the usual way.  They
certainly induce linear operators via left multiplication:
\[ L_A(\psi) = A \psi. \]
Indeed, this is how $\gamma$ and $\tilde{\gamma}$ turn elements of $V$ into
linear operators:
\[ \begin{array}{ccl} 
	\gamma(A)         & = & L_A, \\
	\tilde{\gamma}(A) & = & L_{\tilde{A}}.
\end{array} \]
However, because of nonassociativity, composing such linear operators is
different from multiplying the matrices:
\[ L_A L_B(\psi) = A( B \psi) \neq (AB) \psi = L_{AB} (\psi). \]

Since vectors act on elements of $S_+$ to give elements of $S_-$
and vice versa, they map the space $S_+ \oplus S_-$ to itself.
This gives rise to an action of the Clifford algebra 
$\Cliff(V)$ on $S_+ \oplus S_-$:

\begin{prop}
The vectors $V = \h_2(\K)$ act on the spinors $S_+ \oplus S_- = \K^2
\oplus \K^2$ via the map
\[ \Gamma \maps  V \to      \End(S_+ \oplus S_-) \]
given by 
\[   \Gamma(A)(\psi, \,\phi) = (\widetilde{A} \phi, \, A \psi)  .\]
Furthermore, $\Gamma(A)$ satisfies the Clifford algebra relation:
\[ \Gamma(A)^2 = g(A,A) 1 \]
and so extends to a homomorphism $\Gamma \maps \Cliff(V) \to 
\End(S_+ \oplus S_-)$, i.e.\ a representation of the Clifford algebra 
$\Cliff(V)$ on $S_+ \oplus S_-$.
\end{prop}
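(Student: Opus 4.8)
The plan is to prove three things in turn: that each $\Gamma(A)$ is a well-defined linear endomorphism of $S_+ \oplus S_-$, that it satisfies the Clifford relation $\Gamma(A)^2 = g(A,A)1$, and that this relation forces $\Gamma$ to extend to the Clifford algebra. The first point is immediate: $\Gamma(A)$ is assembled from the two left-multiplication maps $L_A$ and $L_{\tilde{A}}$, each of which is $\R$-linear on $\K^2$, so $\Gamma(A) \in \End(S_+ \oplus S_-)$, and moreover $A \mapsto \Gamma(A)$ is itself linear in $A$. The last point is the universal property of the Clifford algebra: once we know $\Gamma(A)^2 = g(A,A)1$ for all $A \in V$, the linear map $\Gamma \maps V \to \End(S_+ \oplus S_-)$ factors uniquely through $\Cliff(V)$ to yield an algebra homomorphism. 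So essentially all the work is concentrated in the middle step.

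To verify the Clifford relation I would simply apply $\Gamma(A)$ twice. Starting from $(\psi, \phi)$, one application gives $(\tilde{A}\phi,\, A\psi)$, and a second gives $(\tilde{A}(A\psi),\, A(\tilde{A}\phi))$. Thus the relation $\Gamma(A)^2 = g(A,A)1$ is equivalent to the two pointwise identities $\tilde{A}(A\psi) = g(A,A)\psi$ and $A(\tilde{A}\phi) = g(A,A)\phi$ holding for all $\psi, \phi \in \K^2$.

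Here lies the only genuine subtlety, and it is worth flagging as the main obstacle: because $\K$ may be nonassociative, I \emph{cannot} simply write $\tilde{A}(A\psi) = (\tilde{A}A)\psi$ and then invoke Proposition~\ref{prop:metric} to replace $\tilde{A}A$ by $-\det(A)1 = g(A,A)1$. The composite $L_{\tilde{A}} L_A$ is a composite of operators, not the operator $L_{\tilde{A}A}$. So I would expand $\tilde{A}(A\psi)$ componentwise, writing $A = \bigl(\begin{smallmatrix} p & y \\ y^* & q\end{smallmatrix}\bigr)$ with $p, q \in \R$ and $y \in \K$, so that $\tilde{A} = \bigl(\begin{smallmatrix} -q & y \\ y^* & -p\end{smallmatrix}\bigr)$. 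The cross terms that survive are of the form $y(y^*\psi_i)$, and the point is that alternativity rescues the naive manipulation: since $y^* = 2\Re(y) - y$ lies in the subalgebra generated by $y$, the three elements $y, y^*, \psi_i$ in fact lie in the associative subalgebra generated by the two elements $y$ and $\psi_i$. Hence $y(y^*\psi_i) = (yy^*)\psi_i = |y|^2\psi_i$, the real scalars $p, q$ pass through and commute freely, and the remaining terms collect to $(|y|^2 - pq)\psi_i = -\det(A)\psi_i = g(A,A)\psi_i$. The computation for $A(\tilde{A}\phi)$ is identical after noting $\tilde{\tilde{A}} = A$ and $\det \tilde{A} = \det A$, so that $g(\tilde{A},\tilde{A}) = g(A,A)$.

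With both identities established, $\Gamma(A)^2 = g(A,A)1$ follows, and the universal property of $\Cliff(V)$ completes the proof. I expect the whole difficulty to sit in recognizing that the obvious associative shortcut is illegitimate and that alternativity is precisely the property needed to salvage it --- this is exactly the place where the special structure of the normed division algebras does its essential work.
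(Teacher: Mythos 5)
Your proof is correct and follows essentially the same route as the paper: reduce the Clifford relation to $\tilde{A}(A\psi)=g(A,A)\psi$ and $A(\tilde{A}\phi)=g(A,A)\phi$, observe that nonassociativity blocks the naive rebracketing, and then use alternativity (via Artin's theorem, since each surviving product involves only the two generators $y$ and $\psi_i$ together with reals) to justify it anyway. The only difference is cosmetic --- you carry out explicitly the componentwise check that the paper dismisses as ``easy to check,'' and you compute $|y|^2-pq=-\det A$ directly where the paper cites its Proposition on the metric.
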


\begin{proof}
Suppose $A \in V$ and $\Psi = (\psi, \phi) \in S_+ \oplus S_-$. 
We need to check that 
\[ \Gamma(A)^2(\Psi) = -\det(A) \Psi. \]
Here we must be mindful of nonassociativity: we have
\[ \Gamma(A)^2(\Psi) = ( \tilde{A}(A \psi), \, A(\tilde{A} \phi)) . \]
Yet it is easy to check that the expressions $\tilde{A}(A \psi)$ 
and $A (\tilde{A} \phi)$ involve multiplying at most two 
different nonreal elements of $\K$.
These associate, since $\K$ is alternative, so in fact
\[ \Gamma(A)^2(\Psi) = ( (\tilde{A}A) \psi, \, (A \tilde{A}) \phi) . \]
To conclude, we use Proposition \ref{prop:metric}.
\end{proof}

The action of a vector swaps $S_+$ and $S_-$, so acting by vectors twice sends $S_+$
to itself and $S_-$ to itself.  This means that while $S_+$ and $S_-$ are \emph{not}
modules for the Clifford algebra $\Cliff(V)$, they are both modules for the even part
of the Clifford algebra, generated by products of pairs of vectors. The group
$\Spin(n+1,1)$ lives in this even part. Indeed, call a vector $A$ such that $g(A,A)
= \pm 1$ a \define{unit vector}.  It is well known that the group in $\Cliff_0(V)$ generated by products of pairs of unit vectors is a double cover of $\SO(n+1,1)$,  and thus
its identity component is the double cover of $\SO_0(n+1,1)$. This identity component
is therefore $\Spin(n+1,1)$.

While we will not need this in what follows, one can check that:
\begin{itemize}
	\item When $\K = \R$, $S_+ \iso S_-$ is the Majorana spinor
	representation of $\Spin(2,1)$.
	\item When $\K = \C$, $S_+ \iso S_-$ is the Majorana spinor
	representation of $\Spin(3,1)$.
	\item When $\K = \H$, $S_+$ and $S_-$ are the Weyl spinor
	representations of $\Spin(5,1)$.
	\item When $\K = \O$, $S_+$ and $S_-$ are the Majorana--Weyl
	spinor representations of $\Spin(9,1)$.
\end{itemize}
This counts as a consistency check, because these are precisely the kinds of
spinor representations that go into pure super-Yang--Mills theory.  But it is
important to note that the \emph{differences} between these spinor
representations are irrelevant to our argument. What matters is how they
are the \emph{same}---they can all be defined on $\K^2$.

Now that we have representations of $\Spin(n+1, 1)$ on $V$, $S_+$ and
$S_-$, we need to develop the $\Spin(n+1, 1)$-equivariant maps that
relate them.  Ultimately, to define the Lagrangian for pure
super-Yang--Mills theory, we need:
\begin{itemize}
	\item An invariant pairing: 
		\[ \langle -, - \rangle \maps S_+ \tensor S_- \to \R. \]
	\item An equivariant map that turns pairs of spinors into vectors:
		\[ \cdot \, \maps S_\pm \tensor S_\pm \to V. \]
\end{itemize}
Another name for an equivariant map between group representations
is an `intertwining operator'.  As a first step, we show that the 
action of vectors on spinors is itself an intertwining operator:

\begin{prop} 
The maps 
\[ \begin{array}{cccl}
	\gamma \maps & V \tensor S_+ & \to     & S_- \\
                     & A \tensor \psi     & \mapsto & A \psi
\end{array} \]
and
\[ \begin{array}{cccl}
	\tilde{\gamma} \maps & V \tensor S_- & \to     & S_+ \\
	                     & A \tensor \psi    & \mapsto & \tilde{A} \psi
\end{array} \]
are equivariant with respect to the action of $\Spin(n+1, 1)$.
\end{prop}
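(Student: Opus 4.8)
The plan is to recognize that all three representations in sight---on $V$, on $S_+$, and on $S_-$---arise from the single Clifford representation $\Gamma \maps \Cliff(V) \to \End(S_+ \oplus S_-)$ constructed in the previous proposition, and that $\gamma$ and $\tilde\gamma$ are nothing but its off-diagonal blocks. Indeed, feeding $(\psi, 0)$ with $\psi \in S_+$ into $\Gamma(A)$ returns $(0, A\psi)$, so $\gamma(A \tensor \psi) = A\psi$ is exactly the component of $\Gamma(A)$ carrying $S_+$ to $S_-$; likewise $\tilde\gamma$ is the component carrying $S_-$ to $S_+$, since $\Gamma(A)(0,\phi) = (\tilde A \phi, 0)$. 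It therefore suffices to prove that $\Gamma(A) \maps S_+ \oplus S_- \to S_+ \oplus S_-$ intertwines the $\Spin(n+1,1)$-actions, for equivariance of each block then follows by restriction.

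Next I would make explicit how $\Spin(n+1,1)$ acts on each space. Recall that $\Spin(n+1,1)$ sits inside the even Clifford algebra $\Cliff^0(V)$, generated by products of pairs of unit vectors, and acts on the spinors $S_+ \oplus S_-$ through $\Gamma$. On the vectors $V$, the defining Lorentz action is the twisted conjugation: for $g \in \Spin(n+1,1)$ the transformation $\rho(g) \maps V \to V$ is characterized by
\[ \Gamma(\rho(g) A) = \Gamma(g)\, \Gamma(A)\, \Gamma(g)^{-1}. \]
Granting this, equivariance of $\gamma$ is an immediate consequence of $\Gamma$ being an algebra homomorphism. For $\psi \in S_+$ and $g \in \Spin(n+1,1)$,
\[ \gamma\bigl(\rho(g)A \tensor g\psi\bigr) = \Gamma(\rho(g)A)\,\Gamma(g)\psi = \Gamma(g)\,\Gamma(A)\,\Gamma(g)^{-1}\Gamma(g)\,\psi = \Gamma(g)\,\Gamma(A)\psi = g \cdot \gamma(A \tensor \psi), \]
and the identical computation with the roles of $S_+$ and $S_-$ exchanged settles $\tilde\gamma$.

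The part that genuinely needs care---and the main obstacle---is justifying the conjugation formula for the vector action, namely that $\Gamma(g)\Gamma(A)\Gamma(g)^{-1}$ lands back in $\Gamma(V)$ and that the resulting map $\rho(g)$ is precisely the determinant-preserving Lorentz transformation assigned to $g$. By the generation of $\Spin(n+1,1)$ from products of pairs of unit vectors, this reduces to the single-reflection case: that conjugation by a unit vector $\Gamma(u)$ carries $\Gamma(V)$ into itself and induces the expected reflection on $V$, so that pairs of reflections yield honest elements of $\SO(n+1,1)$. Here one must also be vigilant about the nonassociativity of $\K$: the manipulations above are legitimate precisely because they take place inside the associative algebra $\End(S_+ \oplus S_-)$, where $\Gamma$ has its image, rather than among the $2 \times 2$ matrices over $\K$, whose products fail to compose as operators. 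Once this compatibility between the vector representation and conjugation in $\Cliff(V)$ is in hand, no further computation is required.
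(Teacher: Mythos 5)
Your proposal is correct and follows essentially the same route as the paper: reduce to showing that $\Gamma$ is equivariant, note that the vector representation is given by conjugation inside the (associative) Clifford algebra or its image, and conclude by the homomorphism property of $\Gamma$, with the same caveat about where the associative manipulations take place. The only cosmetic difference is that you phrase the conjugation in $\End(S_+ \oplus S_-)$ while the paper phrases it as $BAB^{-1}$ in $\Cliff(V)$ before applying $\Gamma$, and the paper, like you, leaves the single-reflection compatibility as a standard fact.
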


\begin{proof}
Both $\gamma$ and $\tilde{\gamma}$ are restrictions of the map \[ \Gamma \maps
V \tensor (S_+ \oplus S_-) \to S_+ \oplus S_- ,\] so it suffices to check that
$\Gamma$ is equivariant. Indeed, an element $g \in \Spin(n+1,1)$ acts on $V$ by
conjugation on $V \subseteq \Cliff(V)$, and it acts on $S_+ \oplus S_-$ by
$\Gamma(g)$. Thus, we compute:
\[ \Gamma(gAg^{-1}) \Gamma(g)\Psi = \Gamma(g) (\Gamma(A) \Psi), \]
for any $\Psi \in S_+ \oplus S_-$. Here it is important to note that the
conjugation $gAg^{-1}$ is taking place in the associative algebra $\Cliff(V)$,
not in the algebra of matrices.  This equation says that $\Gamma$ is indeed
$\Spin(n+1,1)$-equivariant, as claimed.  
\end{proof}

Now we exhibit the key tool: the pairing between $S_+$
and $S_-$:

\begin{prop}
The pairing
\[ \begin{array}{cccl}
\langle -, - \rangle \maps & S_+ \tensor S_- & \to     & \R \\
 & \psi \tensor \phi & \mapsto & \Re(\psi^\dagger \phi)
\end{array} \]
is invariant under the action of $\Spin(n+1,1)$.
\end{prop}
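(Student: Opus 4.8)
The plan is to show that the pairing $\langle \psi, \phi \rangle = \Re(\psi^\dagger \phi)$ is preserved under the action of the generators of $\Spin(n+1,1)$. Since $\Spin(n+1,1)$ is generated by products of pairs of unit vectors, and the action on spinors is via the even Clifford algebra, it suffices to understand how a single vector $A \in V$ interchanges $S_+$ and $S_-$ and then track what happens when we apply two such vectors. Concretely, I would reduce the invariance statement to a bilinear identity relating the action of $\gamma$ and $\tilde{\gamma}$: the natural guess is that $\gamma$ and $\tilde{\gamma}$ are adjoint to one another with respect to this pairing, i.e.\ that for all $A \in V$, $\psi \in S_+$, $\phi \in S_-$ we have
\[ \langle A\psi, \phi \rangle = \langle \psi, \tilde{A}\phi \rangle, \]
where the left pairing is on $S_- \tensor S_+$ (defined symmetrically by $\Re(\phi^\dagger \psi)$) and the right is on $S_+ \tensor S_-$.

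First I would establish this adjointness identity. Writing it out using the definition of the pairing, the left side is $\Re((A\psi)^\dagger \phi) = \Re((\psi^\dagger A^\dagger) \phi)$ and the right side is $\Re(\psi^\dagger(\tilde{A}\phi))$. Since $A$ is hermitian, $A^\dagger = A$, so the claim becomes $\Re((\psi^\dagger A)\phi) = \Re(\psi^\dagger(A\phi))$. This is precisely a statement that the real part is insensitive to how we bracket a product of three $\K$-valued quantities — which is exactly what Proposition~\ref{prop:realtrace} (and the associativity of the real part) delivers, since $\psi^\dagger A \phi$ is built from entries of $\K$ and the associator is purely imaginary. So the adjointness identity follows from the real-trace cyclicity already proved.

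With adjointness in hand, invariance under a product of two unit vectors $B_1 B_2 \in \Spin(n+1,1)$ is a short computation: the element acts on $S_+$ by $\psi \mapsto \widetilde{B_1}(B_2 \psi)$ (applying a vector sends $S_+ \to S_-$, then the next sends $S_- \to S_+$, using $\gamma$ and $\tilde\gamma$ alternately), and dually on $S_-$. Applying the adjointness identity twice moves both vector factors across the pairing and, because trace reversal is an involution that squares correctly against the metric, the two transfers cancel, leaving $\langle -, - \rangle$ fixed. Since such products generate $\Spin(n+1,1)$, equivariance for the whole group follows.

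The main obstacle I anticipate is bookkeeping around nonassociativity: when two vectors act successively on a spinor, the expression $\widetilde{B_1}(B_2\psi)$ must be handled without silently reassociating matrix products, just as in the earlier Clifford-relation proof. The saving grace is that each individual pairing evaluation only ever involves the real part of a product of $\K$-entries, where Proposition~\ref{prop:realtrace} lets me reassociate freely; so I would be careful to apply the adjointness identity one vector at a time, never combining $B_1 B_2$ into a single matrix, and let the real-trace cyclicity absorb all the reordering.
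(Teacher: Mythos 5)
Your overall plan---reduce invariance to the action of single unit vectors and move hermitian matrices across the pairing using the associativity of the real part---is in the same spirit as the paper's proof, but your key lemma is false as stated, and the error propagates. The identity $\langle A\psi, \phi \rangle = \langle \psi, \tilde{A}\phi \rangle$ fails whenever $\tr A \neq 0$: since $\tilde{A} = A - (\tr A)1$, the two sides differ by $\tr(A)\,\Re(\psi^\dagger \phi)$. Your own verification betrays this: you correctly reduce the left side to $\Re(\psi^\dagger(A\phi))$, but the right side is $\Re(\psi^\dagger(\tilde{A}\phi)) = \Re(\psi^\dagger(A\phi)) - \tr(A)\,\Re(\psi^\dagger\phi)$; the tilde silently disappears between your statement of the claim and the identity you actually check. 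What you have in fact proved is that left multiplication by a hermitian matrix is self-adjoint for this pairing, $\langle A\psi,\phi\rangle = \langle \psi, A\phi\rangle$---true, but not an adjunction between $\gamma$ and $\tilde{\gamma}$. Worse, if you then run your final step with the stated (false) lemma, transferring $\tilde{B_1}$ across the pairing would produce $B_1\bigl(B_1(\tilde{B_2}\phi)\bigr)$, and $B_1 B_1$ is \emph{not} $g(B_1,B_1)1$; it is the mixed product $\tilde{B_1}B_1$ that equals $-\det(B_1)1$. So the advertised cancellation does not happen along the route you describe.

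The second gap is that "the two transfers cancel" is precisely where the real content lives, and it is not supplied by real-trace cyclicity. With the corrected lemma one gets
\[ \bigl\langle \tilde{B_1}(B_2\psi),\, B_1(\tilde{B_2}\phi) \bigr\rangle \;=\; \Re\Bigl( (B_2\psi)^\dagger \bigl( \tilde{B_1}(B_1(\tilde{B_2}\phi)) \bigr) \Bigr), \]
and to finish you must know that $\tilde{B_1}(B_1\chi) = (\tilde{B_1}B_1)\chi = g(B_1,B_1)\chi$. That reassociation is a statement about a $\K^2$-valued expression, not about the real part of a product, so Proposition~\ref{prop:realtrace} does not deliver it; you need the alternativity argument from the Clifford-relation proof (each entry of $\tilde{B_1}(B_1\chi)$ involves at most two distinct nonreal elements of $\K$). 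This is exactly how the paper proceeds: it computes, for a single vector $A$, that $\langle \tilde{\gamma}(A)\phi, \gamma(A)\psi\rangle = \Re\bigl(\phi^\dagger(\tilde{A}(A\psi))\bigr) = g(A,A)\langle\psi,\phi\rangle$, using the purely imaginary associator to reassociate inside $\Re$ and the Clifford relation to collapse $\tilde{A}(A\psi)$, and then observes that products of pairs of unit vectors---the generators of $\Spin(n+1,1)$---preserve the pairing. Repairing your argument amounts to replacing your adjointness lemma with this one-vector computation.
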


\begin{proof}
Given $A \in V$, we use the fact that the associator is purely
imaginary to show that 
\[ 
\Re \left( ( \tilde{A} \phi )^\dagger ( A \psi ) \right) = 
\Re \left( (\phi^\dagger \tilde{A}) (A \psi) \right) = 
\Re \left( \phi^\dagger (\tilde{A} (A \psi)) \right).
\]
As in the proof of the Clifford relation, it is easy to check that the column
vector $\tilde{A}(A \psi)$ involves at most two nonreal elements of $\K$ and
equals $g(A,A) \psi$. So:
\[ 
\langle \tilde{\gamma}(A) \phi, \gamma(A) \psi \rangle = 
g(A,A) \langle \psi, \phi \rangle.
 \]
In particular when $A$ is a unit vector, acting by $A$ swaps the order of $\psi$ and
$\phi$ and changes the sign at most. In fact, this implies our result, though we
need a more explicit presentation of $\Spin(n+1,1)$ to see this. Proposition 5.4.8
of Varadarajan \cite{Varadarajan} tells us that $\Spin(n+1,1)$ is the group
generated by products of even numbers of unit vectors, an even number of which
satisfy $g(A,A) = -1$:
\[ \Spin(n+1,1) = \left\langle A_1 \cdots A_{2p} B_1 \cdots B_{2q} \, : \, A_i, B_j \in V, \, g(A_i,A_i) = 1, \, g(B_j, B_j) = -1 \right\rangle . \]
By the computation above, this implies that $\langle -, - \rangle$ is invariant under
$\Spin(n+1,1)$.
\end{proof}

With this pairing in hand, there is a manifestly equivariant way to
turn a pair of spinors into a vector.  Given 
$\psi, \phi \in S_+$, there is a unique vector $\psi \cdot \phi$ 
whose inner product with any vector $A$ is given by
\[ g(\psi \cdot \phi, A) = \langle \psi, \gamma(A) \phi \rangle .\]
Similarly, given $\psi, \phi \in S_-$, we define 
$\psi \cdot \phi \in V$ by demanding
\[ g(\psi \cdot \phi, A) = \langle \tilde{\gamma}(A) \psi, \phi \rangle \]
for all $A \in V$.  This gives us maps
\[ S_\pm \tensor S_\pm \to V \]
which are manifestly equivariant.

On the other hand, because $S_\pm = \K^2$ and $V = \h_2(\K)$, there is
also a naive way to turn a pair of spinors into a vector using matrix
operations: just multiply the column vector $\psi$ by the row vector
$\phi^\dagger$ and then take the hermitian part:
\[ \psi \phi^\dagger + \phi \psi^\dagger \in \h_2(\K), \]
or perhaps its trace reversal:
\[ \widetilde{\psi \phi^\dagger + \phi \psi^\dagger} \in \h_2(\K). \]
In fact, these naive guesses match the manifestly equivariant approach
described above:

\begin{prop}
The maps $\cdot \, \maps S_\pm \tensor S_\pm \to V$ are given by:
\[ \begin{array}{cccl}
	\cdot \, \maps & S_+ \tensor S_+ & \to     & V \\
 & \psi \tensor \phi   & \mapsto & 
\widetilde{\psi \phi^\dagger + \phi \psi^\dagger}
\end{array} \]
\[ \begin{array}{cccl}
	\cdot \, \maps & S_- \tensor S_- & \to     & V \\
& \psi \tensor \phi  & \mapsto & 
\psi \phi^\dagger + \phi \psi^\dagger.
\end{array} \]
These maps are equivariant with respect to the action of $\Spin(n+1, 1)$.
\end{prop}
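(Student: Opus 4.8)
The statement has two parts: that the maps $\cdot$ are equivariant, and that they are given by the stated explicit formulas. My plan is to treat equivariance as essentially free and to concentrate all the work on the formulas. Equivariance is automatic from the construction: the vector $\psi \cdot \phi$ was defined through the nondegenerate invariant metric $g$ together with the equivariant map $\gamma$ (resp.\ $\tilde{\gamma}$) and the invariant pairing $\langle -,-\rangle$, so the induced map $S_\pm \tensor S_\pm \to V$ is assembled from equivariant data and is therefore equivariant. The real content is thus to show that the naive matrix expressions satisfy the defining equations.

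Because $g$ is nondegenerate, it suffices to check that the proposed vectors reproduce the defining inner products against an arbitrary $A \in V$. First I would note that $\psi\phi^\dagger + \phi\psi^\dagger$ is hermitian, hence lies in $V = \h_2(\K)$ (as does its trace reversal), using $(\psi\phi^\dagger)^\dagger = \phi\psi^\dagger$. Then, rewriting the metric as a real trace via Proposition~\ref{prop:metric}, both cases reduce to a single identity, valid for any hermitian $M \in \h_2(\K)$:
\[ \frac{1}{2}\Retr\big((\psi\phi^\dagger + \phi\psi^\dagger)\,M\big) = \Re\big(\psi^\dagger (M\phi)\big). \]
For $S_-$ I would take $M = \tilde{A}$, using $g(\psi\cdot\phi, A) = \frac{1}{2}\Retr((\psi\cdot\phi)\tilde{A})$ and $\langle \tilde{\gamma}(A)\psi,\phi\rangle = \Re((\tilde{A}\psi)^\dagger\phi) = \Re(\psi^\dagger(\tilde{A}\phi))$. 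For $S_+$ I would take $M = A$, using that trace reversal is an involution, so $\widetilde{\psi\cdot\phi} = \psi\phi^\dagger + \phi\psi^\dagger$ and hence $g(\psi\cdot\phi, A) = \frac{1}{2}\Retr((\psi\phi^\dagger+\phi\psi^\dagger)A)$, together with $\langle\psi,\gamma(A)\phi\rangle = \Re(\psi^\dagger(A\phi))$.

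To prove the core identity I would expand each summand as a real trace of a product of the three (rectangular) matrices $\psi$, $\phi^\dagger$ and $M$, and apply the cyclic invariance of the real trace (Proposition~\ref{prop:realtrace}) to cycle the $2\times 2$ factor so that each term collapses to a scalar: $\Retr(\psi\phi^\dagger M) = \Re((\phi^\dagger M)\psi)$ and $\Retr(\phi\psi^\dagger M) = \Re((\psi^\dagger M)\phi)$. It then remains to identify these two cross terms, which I would do by conjugating: since $\Re(z) = \Re(z^*)$, conjugation reverses products, and $M$ is hermitian (so $M_{ik}^* = M_{ki}$), a term-by-term computation gives $\Re((\phi^\dagger M)\psi) = \Re((\psi^\dagger M)\phi)$. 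Finally the real-part identity $\Re((ab)c) = \Re(a(bc))$ shifts the parentheses to yield $\Re((\psi^\dagger M)\phi) = \Re(\psi^\dagger(M\phi))$, completing the proof.

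The one genuine subtlety, and the main obstacle, is nonassociativity: the entries multiply in $\K$, so none of these matrix manipulations may use associativity freely, and even $(LM)N \neq L(MN)$ for matrices in general. The argument works precisely because every quantity in sight sits under a real part, where the earlier propositions restore exactly the cyclic and reassociation properties needed. A secondary point to verify carefully is the entrywise identity $(M\psi)^\dagger = \psi^\dagger M^\dagger$, which survives nonassociativity because each entry of $M\psi$ is a sum of products of just two elements of $\K$.
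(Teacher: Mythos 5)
Your proof is correct and takes essentially the same route as the paper's: both verify the formulas by pairing against an arbitrary vector, using the cyclic property of the real trace (Proposition~\ref{prop:realtrace}), the conjugation trick to swap $\psi$ and $\phi$, Proposition~\ref{prop:metric}, and the nondegeneracy of $g$, with equivariance coming for free from the construction. The only differences are cosmetic: you check that the candidate matrix satisfies the defining equation rather than deriving it from that equation, and you unify the two chiralities by working with a single hermitian $M$.
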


\begin{proof}
First suppose $\psi,\phi \in S_+$.  We have already seen that the map
$\cdot \maps S_+ \tensor S_+ \to V$ is equivariant.
We only need to show that this map has the desired form.
We start by using some definitions:
\[     g(\psi \cdot \phi, A) =
	\langle \psi, \gamma(A) \phi \rangle =
       \Re(\psi^\dagger (A \phi)) =
       \Retr( \psi^\dagger A \phi). \]
We thus have
\[     g(\psi \cdot \phi, A) =
       \Retr( \psi^\dagger A \phi) =
	\Retr( \phi^\dagger A \psi),
\]
where in the last step we took the adjoint of the inside. 
Applying the cyclic property of the real trace, we obtain
\[ 
g(\psi \cdot \phi, A) = 
\Retr( \phi \psi^\dagger A ) = 
\Retr( \psi \phi^\dagger A ). 
\]
Averaging gives
\[ 
g(\psi \cdot \phi, A) =
\half \Retr( ( \psi \phi^\dagger + \phi \psi^\dagger ) A ). 
\]
On the other hand, Proposition \ref{prop:metric} implies that
\[ g(\psi \cdot \phi, A) 
= \frac{1}{2} \Retr(\widetilde{(\psi \cdot \phi)}A) .\]
Since both these equations hold for all $A$, we must have 
\[   \widetilde{\psi \cdot \phi} =  \psi \phi^\dagger + \phi \psi^\dagger. \]
Doing trace reversal twice gets us back where we started, so
\[   \psi \cdot \phi = \widetilde{\psi \phi^\dagger + \phi \psi^\dagger}
\]
as desired.  A similar calculation shows that if $\psi, \phi \in S_-$, then
$\psi \cdot \phi = \psi \phi^\dagger + \phi \psi^\dagger$.
\end{proof}

\begin{table}[H]
	\begin{center}
\renewcommand{\arraystretch}{1.3}
		\begin{tabular}{r@{$\maps$}c@{$\tensor$}c@{$\to$}ccc}
			\hline
			\multicolumn{4}{c}{Map}                                  & Division algebra notation                           & Index notation\\
			\hline   
			$g$                     & $\, V$   & $\, V$   & $\, \R$  & $\frac{1}{2} \Retr(A \tilde{B})$                    & $A^\mu B_\mu$ \\
			$\gamma$                & $\, V$   & $\, S_+$ & $\, S_-$ & $A\psi$                                             & $\gamma_\mu A^\mu \psi$ \\
			$\tilde{\gamma}$        & $\, V$   & $\, S_-$ & $\, S_+$ & $\tilde{A}\psi$                                     & $\tilde{\gamma}_\mu A^\mu \psi$ \\
			$\cdot \,$              & $\, S_+$ & $\, S_+$ & $\, V$   & $\widetilde{\psi \phi^\dagger + \phi \psi^\dagger}$ & $\psibar \gamma^\mu \phi$ \\
			$\cdot \,$              & $\, S_-$ & $\, S_-$ & $\, V$   & $\psi \phi^\dagger + \phi \psi^\dagger$             & $\psibar \tilde{\gamma}^\mu \phi$ \\
			$\langle - , - \rangle$ & $\, S_+$ & $\, S_-$ & $\, \R$  & $\Re(\psi^\dagger \phi)$                            & $\psibar \phi$ \\
			\hline
		\end{tabular}
\caption{\label{tab:intertwiners}
Division algebra notation vs.\ index notation}
	\end{center}
\end{table}
\renewcommand{\arraystretch}{1}
\noindent 
We can summarize our work so far with a table of the basic bilinear
maps involving vectors, spinors and scalars.  Table 1 shows how to
translate between division algebra notation and something more closely
resembling standard physics notation.  In this table the adjoint
spinor $\psibar$ denotes the spinor dual to $\psi$ under the pairing
$\langle -, - \rangle$.  The gamma matrix $\gamma^\mu$ denotes a
Clifford algebra generator acting on $S_+$, while $\tilde{\gamma}^\mu$
denotes the same element acting on $S_-$.  Of course $\tilde{\gamma}$
is not standard physics notation; the standard notation for this
depends on which of the four cases we are considering: $\R$, $\C$, $\H$ or 
$\O$.

\section{The 3-$\psi$'s Rule} \label{sec:fundamental}

Now we prove the fundamental identity that makes supersymmetry tick in
dimensions 3, 4, 6, and 10.  This identity was dubbed the `3-$\psi$'s
rule' by Schray \cite{Schray}.  The following proof is based on an
argument in the appendix of the paper by Dray, Janesky and Manogue
\cite{DrayJaneskyManogue}.  Note that it is really the alternative law,
rather than the normed division algebra axioms, that does the job:

\begin{thm}  
\label{thm:fundamental_identity}
Suppose $\psi \in S_+$.  Then $(\psi \cdot \psi) \psi = 0$. Similarly, if 
$\phi \in S_-$, then $(\widetilde{\phi \cdot \phi}) \phi = 0$.
\end{thm}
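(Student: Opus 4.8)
The plan is to prove the identity by a direct computation carried out entirely in terms of the two division-algebra components of the spinor. First I would write $\psi = (a,b)^T$ with $a,b \in \K$, so that $\psi^\dagger = (a^*, b^*)$ is the corresponding row vector. The previous proposition gives $\psi \cdot \psi = \widetilde{\psi\psi^\dagger + \psi\psi^\dagger} = \widetilde{2\psi\psi^\dagger}$, and since
\[ \psi\psi^\dagger = \left( \begin{array}{cc} |a|^2 & ab^* \\ ba^* & |b|^2 \end{array} \right), \]
applying trace reversal (subtracting the trace times the identity) yields the explicit hermitian matrix
\[ \psi \cdot \psi = \left( \begin{array}{cc} -2|b|^2 & 2ab^* \\ 2ba^* & -2|a|^2 \end{array} \right). \]

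Next I would apply this matrix to the column vector $\psi$ by honest left multiplication, keeping in mind that in the nonassociative setting this means $L_{\psi \cdot \psi}$ acting on $\psi$, not a product of matrices. The two entries of $(\psi \cdot \psi)\psi$ then come out to $-2|b|^2 a + 2(ab^*)b$ and $2(ba^*)a - 2|a|^2 b$.

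The crux is to show each entry vanishes, and this is the one place where the structure of $\K$ is used. The key step is the reassociation $(ab^*)b = a(b^*b) = |b|^2 a$: it is valid because $a$ and $b$ generate an associative subalgebra by alternativity, and $b^*$ lies in that same subalgebra since it differs from $b$ by a real multiple of $1$. The identical reasoning gives $(ba^*)a = |a|^2 b$. Substituting, both entries collapse to zero. I expect this reassociation to be essentially the only obstacle: everything depends on observing that each product involves at most the two nonreal elements $a$ and $b$, so that alternativity is strong enough to license the move even for the octonions.

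Finally, the $S_-$ case requires no new work. There the map gives $\phi \cdot \phi = 2\phi\phi^\dagger$, so $\widetilde{\phi\cdot\phi} = \widetilde{2\phi\phi^\dagger}$ is exactly the matrix computed above with $\phi$ in place of $\psi$. Hence $(\widetilde{\phi\cdot\phi})\phi$ is the same computation and vanishes by the same alternativity argument.
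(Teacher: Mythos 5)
Your proof is correct and follows essentially the same route as the paper's: both reduce $(\psi \cdot \psi)\psi$ to $2\bigl((\psi\psi^\dagger)\psi - \psi(\psi^\dagger\psi)\bigr)$ and kill it via alternativity, on the grounds that every product involved contains only the two components $a, b \in \K$ of $\psi$ (conjugation changing nothing, since $b^*$ lies in the subalgebra generated by $b$ and $1$). The only difference is that you carry out the matrix computation entry by entry, where the paper keeps it in matrix form.
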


\begin{proof}
Suppose $\psi \in S_+$.  By definition,
\[     (\psi \cdot \psi) \psi =  2(\widetilde{\psi \psi^\dagger}) \psi 
= 2(\psi \psi^\dagger - \tr(\psi \psi^\dagger) 1) \psi .\]
It is easy to check that $\tr (\psi \psi^\dagger) = \psi^\dagger \psi$,
so 
\[     (\psi \cdot \psi) \psi =  
2((\psi \psi^\dagger) \psi - (\psi^\dagger \psi) \psi ).\]
Since $\psi^\dagger \psi$ is a real number, it commutes with $\psi$:
\[     (\psi \cdot \psi) \psi =  
2((\psi \psi^\dagger) \psi - \psi (\psi^\dagger \psi) ) .\]
Since $\K$ is alternative, every subalgebra of $\K$ generated by two elements
is associative.  Since $\psi \in \K^2$ is built from
just two elements of $\K$, the right-hand side vanishes.
The proof of the identity for $\phi \in S_-$ is similar. 
\end{proof}

It will be useful to state this result in a somewhat more elaborate
form.  To save space we only give this version for spinors in
$S_+$, though an analogous result holds for spinors in $S_-$:

\begin{thm}
\label{thm:cubic}
Define a map
\[ \begin{array}{cccl}
	T \maps & S_+ \tensor S_+ \tensor S_+ & \to     & S_- \\
       & \psi \tensor \phi \tensor \chi     
      & \mapsto & 
(\psi \cdot \phi) \chi + (\phi \cdot \chi) \psi + (\chi \cdot \psi) \phi. 
\end{array} \]
Then $T = 0$.
\end{thm}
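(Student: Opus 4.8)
The plan is to recognize $T$ as essentially the polarization of the cubic map appearing in the 3-$\psi$'s rule, and then invoke that rule to kill the polarized form. The starting observation is that the product $\cdot \maps S_+ \tensor S_+ \to V$ is symmetric: since $\psi \cdot \phi = \widetilde{\psi\phi^\dagger + \phi\psi^\dagger}$, interchanging $\psi$ and $\phi$ leaves it unchanged. Consequently each summand $(\psi\cdot\phi)\chi$ in $T$ is symmetric in its first two inputs, and a short check shows that the cyclic sum defining $T$ is in fact \emph{fully} symmetric in $\psi,\phi,\chi$. Moreover $T$ is genuinely trilinear: although $\K$ is nonassociative, $T$ is assembled from the bilinear map $\cdot$ followed by the left action of a vector on a spinor, so linearity in each slot is automatic and no associativity issue intrudes at this stage.

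Next I would evaluate $T$ on the diagonal. Setting all three arguments equal gives
\[ T(\psi,\psi,\psi) = 3\,(\psi\cdot\psi)\psi, \]
which vanishes identically by Theorem~\ref{thm:fundamental_identity}, the 3-$\psi$'s rule. Thus $T$ is a symmetric trilinear map from $S_+ \tensor S_+ \tensor S_+$ to $S_-$ whose associated cubic map $Q(\psi) := T(\psi,\psi,\psi)$ is identically zero.

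The final step is polarization. Over $\R$ a symmetric trilinear map is completely recovered from its diagonal cubic map via
\[ 6\,T(\psi,\phi,\chi) = Q(\psi+\phi+\chi) - Q(\psi+\phi) - Q(\phi+\chi) - Q(\chi+\psi) + Q(\psi) + Q(\phi) + Q(\chi), \]
as one verifies by expanding the right-hand side using multilinearity and symmetry and collecting the terms of each multidegree. Since $Q$ vanishes, every term on the right is zero, and we conclude $T = 0$.

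The computations here are all routine; the one point that requires a moment's care is verifying that the cyclic sum $T$ is totally symmetric rather than merely symmetric in two of its slots, since without this the clean polarization identity does not apply. I expect this to be the only real obstacle. Everything else is bookkeeping, and crucially the entire argument rests on the 3-$\psi$'s rule as its single nontrivial input.
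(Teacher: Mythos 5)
Your proposal is correct and follows essentially the same route as the paper's proof: observe that the symmetry of $\cdot$ makes $T$ totally symmetric, note that $T(\psi,\psi,\psi)=3(\psi\cdot\psi)\psi=0$ by the 3-$\psi$'s rule, and conclude $T=0$ by polarization of a symmetric trilinear map. The only difference is that you write out the polarization identity explicitly where the paper merely cites the principle.
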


\begin{proof}
It is easy to check that $\psi \cdot \phi = \phi \cdot \psi$ for all
$\psi, \phi \in S_+$, so the map $T$ is completely symmetric in its
three arguments.  Just as any symmetric bilinear form $B(x,y)$ can be
recovered from the corresponding quadratic form $B(x,x)$ by
polarization, so too can any symmetric trilinear form be recovered
from the corresponding cubic form.  Since $T(\psi, \psi, \psi) = 0$ by
Theorem \ref{thm:fundamental_identity}, it follows that $T = 0$.
\end{proof}

To see \emph{how} this theorem is the key to supersymmetry for
super-Yang--Mills theory, we need a little superalgebra.

\section{Superalgebra} \label{sec:superalgebra}

So far we have used normed division algebras to construct a number of
algebraic structures: vectors as elements of $\h_2(\K)$, spinors as
elements of $\K^2$, and the various bilinear maps involving vectors,
spinors, and scalars.  However, to describe supersymmetry, we also
need superalgebra.  Specifically, we need anticommuting spinors.
Physically, this is because spinors are fermions, so we need them to
satisfy anticommutation relations.  Mathematically, this means that we
will do our algebra in the category of `super vector spaces',
SuperVect, rather than the category of vector spaces, Vect.

A \define{super vector space} is a $\Z_2$-graded vector space $V = V_0
\oplus V_1$ where $V_0$ is called the \define{even} or
\define{bosonic} part, and $V_1$ is called the \define{odd} or
\define{fermionic} part.  Like Vect, SuperVect is a symmetric monoidal 
category \cite{Baez:Rosetta}.  It has:
\begin{itemize}
	\item $\Z_2$-graded vector spaces as objects;
	\item Grade-preserving linear maps as morphisms;
	\item A tensor product $\tensor$ that has the following grading: if $V
		= V_0 \oplus V_1$ and $W = W_0 \oplus W_1$, then $(V \tensor
		W)_0 = (V_0 \tensor W_0) \oplus (V_1 \tensor W_1)$ and 
              $(V \tensor W)_1 = (V_0 \tensor W_1) \oplus (V_1 \tensor W_0)$;
	\item A braiding
		\[ B_{V,W} \maps V \tensor W \to W \tensor V \]
		defined as follows: $v \in V$ and $w \in W$ 
              are of grade $p$ and $q$, then
		\[ B_{V,W}(v \tensor w) = (-1)^{pq} w \tensor v. \]
\end{itemize}
The braiding encodes the `the rule of signs': in any calculation, when
two odd elements are interchanged, we introduce a minus sign.  

In what follows \emph{we treat the normed division algebra $\K$ as an odd
super vector space}.   This turns out to force the spinor representations
$S_\pm$ to be odd and the vector representation $V$ to be even, as follows.

There is an obvious notion of direct sums for super vector spaces, with
\[         (V \oplus W)_0 = V_0 \oplus W_0  , \qquad 
           (V \oplus W)_1 = V_1 \oplus W_1 \]
and also an obvious notion of duals, with
\[         (V^*)_0 = (V_0)^*, \qquad (V^*)_1 = (V_1)^*  .\]
We say a super vector space $V$ is \define{even} if it equals its 
even part ($V = V_0$), and \define{odd} if it equals its odd part 
($V = V_1$).  Any subspace $U \subseteq V$ of an even (resp.\ odd) 
super vector space becomes a super vector space which is again even
(resp.\ odd).
                         
We treat the spinor representations $S_\pm$ as super vector spaces
using the fact that they are the direct sum of two copies of $\K$.
Since $\K$ is odd, so are $S_+$ and $S_-$.  Since $\K^2$ is odd, so is
its dual.  This in turn forces the space of linear maps from $\K^2$ to
itself, $\End(\K^2) = \K^2 \tensor (\K^2)^*$, to be even. This even
space contains the $2 \times 2$ matrices $\K[2]$ as the subspace of
maps realized by left multiplication:
\[ \begin{array}{rcl}
\K[2] & \inclusion & \End(\K^2) \\
A     & \mapsto         & L_A .
\end{array} \]
$\K[2]$ is thus even. Finally, this forces the subspace of hermitian
$2 \times 2$ matrices, $\h_2(\K)$, to be even.  So, the vector
representation $V$ is even.  All this matches the usual rules in
physics, where spinors are fermionic and vectors are bosonic.

\section{Super-Yang--Mills Theory} \label{sec:sym}

We are now ready to give a division algebra interpretation
of the pure super-Yang--Mills Lagrangian
\[ L = -\fourth \langle F, F \rangle + 
\half \langle \psi, \slashed{D}_A \psi \rangle \]
and use this to prove its supersymmetry. For simplicity, we shall work
over Minkowski spacetime, $M$.  This allows us to treat all bundles as
trivial, sections as functions, and connections as $\g$-valued 1-forms.

At the outset, we fix an invariant inner product on $\g$, the Lie algebra of
a semisimple Lie group $G$. We shall use the following standard tools from
differential geometry to construct $L$, none of which need involve spinors
or division algebra technology:
\begin{itemize}
	\item A connection $A$ on a principal $G$-bundle over $M$.
	      Since the bundle is trivial we think of this connection 
             as a $\g$-valued 1-form.
	\item The exterior covariant derivative $d_A = d + [A, -]$ on 
             $\g$-valued $p$-forms.
	\item The curvature $F = dA + \frac{1}{2} [A,A]$,  
             which is a $\g$-valued 2-form.
	\item The usual pointwise inner product $\langle F, F \rangle$ on
		$\g$-valued 2-forms, defined using the Minkowski metric on $M$
		and the invariant inner product on $\g$.
\end{itemize}
We also need the following spinorial tools. Recall from the preceding
section that $S_+$ and $S_-$ are odd objects in SuperVect.
So, whenever we switch two spinors, we introduce a minus sign.
\begin{itemize}
	\item A $\g$-valued section $\psi$ of a spin bundle over $M$. Note that
		this is, in fact, just a function:
		\[ \psi \maps M \to S_\pm \tensor \g. \]
		We call the collection of all such functions $\Gamma(S_\pm
		\tensor \g)$.
	\item The covariant Dirac operator $\slashed{D}_A$ derived from the
              connection $A$.  Of course,
		\[ \slashed{D}_A \maps \Gamma(S_\pm \tensor \g) \to \Gamma(S_\mp \tensor \g) \]
		and in fact,
		\[ \slashed{D}_A = \slashed{\partial} + A. \]
	\item A bilinear pairing 
		\[ \langle -,- \rangle \maps \Gamma(S_+ \tensor \g) \tensor \Gamma(S_- \tensor \g) \to C^\infty(M) \]
		built pointwise using our pairing
		\[ \langle -,- \rangle \maps S_+ \tensor S_- \to \R \]
		and the invariant inner product on $\g$.
\end{itemize}

The basic fields in our theory are a
connection on a principal $G$-bundle, which we think of as a
$\g$-valued 1-form:
\[ A \maps M \to V^* \tensor \g. \]
and a $\g$-valued spinor field, which we think of as a
$S_+ \tensor \g$-valued function on $M$:
\[ \psi \maps M \to S_+ \tensor \g .\]
All our arguments would work just as well with
$S_-$ replacing $S_+$.

To show that $L$ is supersymmetric, we need to show $\delta L$ is
a total divergence when $\delta$ is the following supersymmetry 
transformation:
\begin{eqnarray*}
    \delta A    & = & \epsilon \cdot \psi \\
    \delta \psi & = & \half F \epsilon
\end{eqnarray*}
where $\epsilon$ is an arbitrary constant spinor field, treated as 
odd, but not $\g$-valued.  By a \define{supersymmetry transformation} we mean
that computationally we treat $\delta$ as a derivation. So, it is linear: 
\[      \delta (\alpha f + \beta g) = \alpha \delta f + \beta \delta g \]
where $\alpha, \beta \in \R$, and it satisfies the product rule:
\[         \delta (f g) = \delta(f) g + f \delta g. \]
For a more formal definition of `supersymmetry transformation' see
\cite{Deligne}.

The above equations require further explanation.  The dot in 
$\epsilon \cdot \psi$ denotes an operation that combines 
the spinor $\epsilon$ with the $\g$-valued spinor $\psi$ 
to produce a $\g$-valued 1-form.  We build this from our basic
intertwiner
\[ \cdot \, \maps S_+ \tensor S_+ \to V. \]
We identify $V$ with $V^*$ using the Minkowski inner product $g$, obtaining
\[ \cdot \, \maps S_+ \tensor S_+ \to V^*. \]
Then we tensor both sides with $\g$. This gives us a way to act by a 
spinor field on a $\g$-valued spinor field to obtain a $\g$-valued 1-form.
We take the liberty of also denoting this with a dot:
\[ \cdot \, \maps \Gamma(S_+) \tensor \Gamma(S_+ \tensor \g) 
\to \Omega^1(M,\g). \]

We also need to explain how the 2-form $F$ acts on the constant spinor 
field $\epsilon$.  Using the Minkowski metric, we can identify differential
forms on $M$ with sections of the Clifford algebra bundle over $M$:
\[ \Omega^* (M) \iso \Cliff(M). \]
Using this, differential forms act on spinor fields.  Tensoring
with $\g$, we obtain a way for $\g$-valued differential forms like $F$ 
to act on spinor fields like $\epsilon$ to give $\g$-valued spinor fields
like $F \epsilon$.  

Let us now apply the supersymmetry transformation
to each term in the Lagrangian.  First, the bosonic term:

\begin{prop}
The bosonic term has:
\[
 \delta \langle F, F \rangle 
= 2 (-1)^{n+1} \, \langle \psi, ({\star d_A \star} \, F) \epsilon \rangle + 
\rm{divergence}. 
\]
\end{prop}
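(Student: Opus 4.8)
The plan is to vary the curvature and then integrate by parts. Since $\delta$ acts as a derivation and $\delta A = \epsilon \cdot \psi$, I first compute $\delta F$ from $F = dA + \half[A,A]$. For two $\g$-valued $1$-forms the graded bracket is symmetric, so the two pieces coming from $\half[A,A]$ combine into a single term and
\[ \delta F = d(\epsilon\cdot\psi) + [A,\epsilon\cdot\psi] = d_A(\epsilon\cdot\psi). \]
Everything occurring here is even---the odd spinors $\epsilon$ and $\psi$ enter only through the even combination $\epsilon\cdot\psi$---so the pointwise inner product on $\g$-valued $2$-forms is symmetric in the relevant arguments, and the product rule gives
\[ \delta\langle F,F\rangle = 2\,\langle d_A(\epsilon\cdot\psi),\, F\rangle. \]

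Next I would pass to top-degree forms via the Hodge star, writing $\langle d_A(\epsilon\cdot\psi),\, F\rangle\,\vol = \big(d_A(\epsilon\cdot\psi)\big)\wedge\star F$, where the wedge simultaneously contracts the $\g$-indices using the invariant inner product. Here integration by parts does the real work. Because that inner product on $\g$ is invariant, the $[A,-]$ contributions cancel and the Leibniz rule holds with $d_A$ in place of $d$:
\[ \big(d_A(\epsilon\cdot\psi)\big)\wedge\star F = d\big((\epsilon\cdot\psi)\wedge\star F\big) + (\epsilon\cdot\psi)\wedge d_A\star F. \]
The first term is the advertised divergence. In the second, $d_A\star F$ is an $(n+1)$-form, so applying $\star$ once more produces the $\g$-valued $1$-form $\star d_A\star F$; rewriting $(\epsilon\cdot\psi)\wedge d_A\star F$ as $g(\epsilon\cdot\psi,\, \star d_A\star F)\,\vol$ recovers the codifferential of $F$, up to the sign produced by $\star\star$ in Lorentzian signature.

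Finally I would convert this vector inner product into the spinor pairing that appears in the statement. By the defining property of the intertwiner $\cdot\maps S_+\tensor S_+\to V$, together with its symmetry $\epsilon\cdot\psi = \psi\cdot\epsilon$, we have for every vector $B$
\[ g(\epsilon\cdot\psi,\, B) = \langle\psi,\, \gamma(B)\epsilon\rangle = \langle\psi,\, B\epsilon\rangle. \]
Taking $B = \star d_A\star F$ (tensored with the identity on $\g$, so that the $\g$-inner product is carried along) turns $g(\epsilon\cdot\psi,\, \star d_A\star F)$ into $\langle\psi,\, (\star d_A\star F)\epsilon\rangle$, and collecting the constants assembles the claimed identity.

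The hard part will be bookkeeping the overall constant $(-1)^{n+1}$ rather than any single conceptual step. That sign is built from the degree-dependent factor in the graded Leibniz rule, the relation $\alpha\wedge\star\beta = \langle\alpha,\beta\rangle\,\vol$ for $1$-forms, and---most delicately---the value of $\star\star$ on forms in the $(n+2)$-dimensional Minkowski space of signature $(n+1,1)$, which differs from the Riemannian case by the sign of the metric determinant. Once $\delta F = d_A(\epsilon\cdot\psi)$ and the adjointness of $\cdot$ with $\gamma$ are in hand, only these Lorentzian Hodge-star conventions stand between the computation and the stated coefficient.
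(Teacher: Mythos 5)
Your overall strategy is the same as the paper's: compute $\delta F = d_A(\epsilon\cdot\psi)$, integrate by parts to move $d_A$ off of $\delta A$ and onto $F$ (producing $\star d_A \star F$ plus a divergence), and then convert the vector pairing $\langle \star d_A\star F,\ \epsilon\cdot\psi\rangle$ into the spinor pairing using the adjointness of $\cdot$ with $\gamma$. The paper does that last conversion concretely, via $g(\epsilon\cdot\psi, B) = \half\Retr\bigl(B(\epsilon\psi^\dagger+\psi\epsilon^\dagger)\bigr)$ and the cyclic property of the real trace, whereas you invoke the defining property of the intertwiner abstractly; either is fine.

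The genuine gap is in the sign. You explicitly defer the computation of the overall constant and list the places it comes from: the graded Leibniz rule, the identity $\alpha\wedge\star\beta = \langle\alpha,\beta\rangle\,\vol$, and the value of $\star\star$ in Lorentzian signature. That inventory is incomplete: it accounts only for the Hodge-theoretic factor, which for a $2$-form in dimension $d = n+2$ with signature $s=1$ works out to $(-1)^{dp+d+1+s} = (-1)^n$, not $(-1)^{n+1}$. The missing contribution is the \emph{fermionic} sign rule. Your step $g(\epsilon\cdot\psi, B) = \langle\psi, B\epsilon\rangle$ is an identity of the non-super intertwiners from Section 3; but $\epsilon$ and $\psi$ are odd elements of SuperVect, and rewriting $\half\Retr\bigl(B(\epsilon\psi^\dagger+\psi\epsilon^\dagger)\bigr)$ as $\Re(\psi^\dagger(B\epsilon))$ requires moving the odd $\psi^\dagger$ past the odd $\epsilon$, which by the braiding introduces a factor of $-1$. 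This is exactly the ``minus sign in accordance with the sign rule'' in the paper's proof, and it is what turns $(-1)^n$ into the stated $(-1)^{n+1}$. As written, your accounting would land on $2(-1)^n\langle\psi,(\star d_A\star F)\epsilon\rangle$, which has the wrong sign.
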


\begin{proof}
By the symmetry of the inner product, we get:
\[ \delta \langle F, F \rangle = 2 \langle F, \delta F \rangle. \]
Using the handy formula $\delta F = d_A \delta A$, we have:
\[ \langle F, \delta F \rangle = \langle F, d_A \delta A \rangle. \]
Now the adjoint of the operator $d_A$ is $\star d_A \star$, up
to a pesky sign: if $\nu$ is a $\g$-valued $(p-1)$-form and $\mu$ is
a $\g$-valued $p$-form, we have 
\[   \langle \mu, d_A \nu \rangle =
(-1)^{dp + d + 1 + s} \langle {\star d_A \star} \, \mu, \nu \rangle 
+ \mbox{divergence} \]
where $d$ is the dimension of spacetime and $s$ is the signature,
i.e., the number of minus signs in the diagonalized metric.  It
follows that
\[ \langle F, \delta F \rangle = \langle F, d_A \delta A \rangle =
(-1)^n \,
\langle {\star d_A \star} \, F, \delta A \rangle + \mbox{divergence} \]
where $n$ is the dimension of $\K$.  
By the definition of $\delta A$, we get
\[ \langle {\star d_A \star} \, F, \delta A \rangle
=
\langle  {\star d_A \star} \, F, \epsilon \cdot \psi \rangle. 
\]
Now we can use division algebra technology to show:
\[
\langle {\star d_A \star} \, F , \epsilon \cdot \psi \rangle = \half
\Retr\left( ({\star d_A \star} \, F) (\epsilon \psi^\dagger + \psi
\epsilon^\dagger) \right) = -\langle \psi, ({\star d_A \star} \, F)
\epsilon \rangle, 
\] 
using the cyclic property of the real trace in the last step, and
introducing a minus sign in accordance with the sign rule.  Putting
everything together, we obtain the desired result.
\end{proof}

Even though this proposition involved the bosonic term only, division
algebra technology was still a useful tool in its proof. This is even
more true in the next proposition, which deals with the the fermionic term:

\begin{prop}
The fermionic term has:
\[ \delta \langle \psi, \slashed{D}_A \psi \rangle = \langle \psi, 
\slashed{D}_A(F \epsilon) \rangle + \tri \psi + \rm{divergence} \]
where
\[ \tri \psi = \langle \psi, (\epsilon \cdot \psi) \psi \rangle. \]
\end{prop}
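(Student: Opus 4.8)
The plan is to apply the variation $\delta$ to $\langle \psi, \slashed{D}_A \psi \rangle$ using the Leibniz rule, tracking signs via the rule of signs for the odd spinors. Since $\slashed{D}_A = \slashed{\partial} + A$ and the algebraic Clifford action by the even object $A$ passes through $\delta$ without a sign, I first record the variation of the Dirac term inside the pairing:
\[ \delta(\slashed{D}_A \psi) = \slashed{D}_A(\delta \psi) + (\delta A)\psi, \]
the second summand arising from varying the connection sitting inside $\slashed{D}_A$. Expanding $\delta \langle \psi, \slashed{D}_A \psi \rangle$ by Leibniz and substituting $\delta \psi = \half F \epsilon$ and $\delta A = \epsilon \cdot \psi$ then produces exactly three terms: two that involve $F\epsilon$, namely $\langle \half F \epsilon, \slashed{D}_A \psi \rangle$ and $\langle \psi, \slashed{D}_A(\half F \epsilon) \rangle$, together with one purely algebraic term $\langle \psi, (\epsilon \cdot \psi) \psi \rangle$, which is precisely $\tri \psi$.

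The crux is to fuse the two $F\epsilon$ terms into the single expression $\langle \psi, \slashed{D}_A(F \epsilon) \rangle$, modulo a divergence. For this I would invoke the formal self-adjointness of $\slashed{D}_A$ with respect to $\langle -, - \rangle$. This splits into two pieces: the algebraic part $A$ is self-adjoint because a vector acts on spinors through a hermitian matrix, so that $\langle \chi, A\psi \rangle = \langle A\chi, \psi \rangle$ follows from the associator being purely imaginary, exactly as in the proof of the bosonic term; and the derivative part $\slashed{\partial}$ transfers across the pairing by an ordinary integration by parts, which is where the divergence is born. Applying this to $\langle \half F \epsilon, \slashed{D}_A \psi \rangle$ moves $\slashed{D}_A$ onto $F\epsilon$. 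Finally, the pairing $\langle a, b \rangle = \Re(a^\dagger b)$ is symmetric as a naive bilinear form but picks up a minus sign once the rule of signs is applied to its two odd arguments; this reverses the order of the spinors and matches the result to the second $F\epsilon$ term. The two coefficients $\half$ then add to $1$, giving $\langle \psi, \slashed{D}_A(F \epsilon) \rangle$.

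Assembling the pieces yields
\[ \delta \langle \psi, \slashed{D}_A \psi \rangle = \langle \psi, \slashed{D}_A(F \epsilon) \rangle + \tri \psi + \mbox{divergence}, \]
as claimed. The main obstacle is entirely one of sign bookkeeping: the graded Leibniz sign from moving the odd $\delta$ past the odd $\psi$, the sign from integrating $\slashed{\partial}$ by parts, and the sign from the rule of signs in the symmetry of the pairing must all conspire so that the two half-terms reinforce (rather than cancel) to give coefficient $1$ on $\langle \psi, \slashed{D}_A(F \epsilon) \rangle$, while the algebraic term survives with the correct sign as $\tri \psi$. Notably, no special property of the division algebra enters at this stage; the division-algebra input is reserved for the subsequent vanishing $\tri \psi = 0$ via the 3-$\psi$'s rule.
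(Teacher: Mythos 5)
Your proof is correct and follows essentially the same route as the paper: expand by the Leibniz rule into three terms, identify the $\delta A$ contribution as $\tri \psi$, and combine the two $F\epsilon$ terms by moving $\slashed{D}_A$ across the pairing at the cost of a divergence. The paper merely packages your ``self-adjointness plus graded symmetry of the pairing'' step as the single identity that the divergence of the 1-form $\psi \cdot \delta\psi$ equals $-\langle \delta\psi, \slashed{D}_A \psi \rangle + \langle \psi, \slashed{D}_A \delta\psi \rangle$, so the content and the sign bookkeeping are the same.
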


\begin{proof}
It is easy to compute:
\[ \delta \langle \psi, \slashed{D}_A \psi \rangle = 
\langle \delta \psi, \slashed{D}_A \psi \rangle + 
\langle \psi, \delta{\slashed{D}_A} \psi \rangle + 
\langle \psi, \slashed{D}_A \delta \psi \rangle. \]
Now we insert $\delta \slashed{D}_A = \delta A = \epsilon \cdot \psi$, and 
thus see that the penultimate term is the trilinear one:
\[ \tri \psi = \langle \psi, (\epsilon \cdot \psi) \psi \rangle. \]
So, let us concern ourselves with the remaining terms:
\[ \langle \delta \psi, \slashed{D}_A \psi \rangle + 
\langle \psi, \slashed{D}_A \delta \psi \rangle. \]
A computation using the product rule shows that the divergence of the 
1-form $\psi \cdot \phi$ is given by $-\langle \phi, \slashed{D}_A \psi \rangle
+ \langle \psi, \slashed{D}_A \phi \rangle$, where the minus sign on the first
term arises from using the sign rule with these odd spinors. In the terms under
consideration, we can use this identity to move $\slashed{D}_A$ onto $\delta
\psi$:
\[ \langle \delta \psi, \slashed{D}_A \psi \rangle + \langle \psi, 
\slashed{D}_A \delta \psi \rangle = 
2 \langle \psi, \slashed{D}_A \delta \psi \rangle + \mbox{divergence}. \]
Substituting $\delta \psi = \half F \epsilon$, we obtain the desired
result.
\end{proof}

Using these two propositions, it is immediate that
\begin{eqnarray*}
	\delta L & = & 
-\fourth \delta \langle F,F \rangle + 
\half \delta \langle \psi, \slashed{D}_A \psi \rangle \\
        & = & \half (-1)^n 
\langle \psi, ({\star d_A \star}\, F) \epsilon \rangle + 
\half \langle \psi, \slashed{D}_A(F \epsilon) \rangle + 
\half \tri \psi + \rm{divergence} 
\end{eqnarray*}
All that remains to show is that $\slashed{D}_A(F \epsilon) = (-1)^{n+1}
({\star d_A \star}F) \, \epsilon$.  Indeed, Snygg shows (Eq.\ 7.6 
in~\cite{Snygg}) that for an
ordinary, non-$\g$-valued $p$-form $F$ 
\[ \slashed{\partial} (F \epsilon) =
(d F) \epsilon + (-1)^{d + dp + s} ({\star d \star} \, F) \epsilon  \] 
where $d$ is the dimension of spacetime and $s$ is the signature. 
This is easily generalized to
covariant derivatives and $\g$-valued $p$-forms:
\[ \slashed{D}_A (F \epsilon) = 
(d_A F) \epsilon + (-1)^{d + dp + s} ({\star d_A \star} \, F) \epsilon .\] 
In particular, when $F$ is the curvature 2-form, the first term vanishes by the
Bianchi identity $d_A F = 0$, and we are left with:
\[ \slashed{D}_A (F \epsilon) = (-1)^{n+1} ({\star d_A \star} \, F) \epsilon \]
where $n$ is the dimension of $\K$. We have thus shown:

\begin{prop}
\label{prop:variation}
Under supersymmetry transformations, the Lagrangian $L$ has: 
\[   \delta L = \half \tri \psi + \rm{divergence}.   \]
\end{prop}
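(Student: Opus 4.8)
The plan is to assemble this statement from the two preceding propositions together with the Dirac-operator identity just derived, using only that $\delta$ is a derivation and hence $\R$-linear. First I would expand
\[ \delta L = -\fourth\,\delta\langle F,F\rangle + \half\,\delta\langle\psi,\slashed{D}_A\psi\rangle, \]
and substitute the bosonic variation $\delta\langle F,F\rangle = 2(-1)^{n+1}\langle\psi,(\star d_A\star F)\epsilon\rangle + \mbox{divergence}$ together with the fermionic variation $\delta\langle\psi,\slashed{D}_A\psi\rangle = \langle\psi,\slashed{D}_A(F\epsilon)\rangle + \tri\psi + \mbox{divergence}$. Collecting the coefficient $-\fourth\cdot 2(-1)^{n+1} = \half(-1)^n$ on the bosonic piece, this gives
\[ \delta L = \half(-1)^n\langle\psi,(\star d_A\star F)\epsilon\rangle + \half\langle\psi,\slashed{D}_A(F\epsilon)\rangle + \half\tri\psi + \mbox{divergence}. \]

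Next I would eliminate the Dirac term. Using the generalization of Snygg's formula to covariant derivatives, $\slashed{D}_A(F\epsilon) = (d_A F)\epsilon + (-1)^{d+dp+s}(\star d_A\star F)\epsilon$, with $F$ the curvature $2$-form ($p=2$), spacetime dimension $d = n+2$, and signature $s = 1$. The Bianchi identity $d_A F = 0$ kills the first term, and a parity count gives $d + dp + s = 3(n+2)+1 \equiv n+1 \pmod 2$, so that $\slashed{D}_A(F\epsilon) = (-1)^{n+1}(\star d_A\star F)\epsilon$. Substituting this, the fermionic Dirac contribution becomes $\half(-1)^{n+1}\langle\psi,(\star d_A\star F)\epsilon\rangle$.

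The crux is then the cancellation: the two surviving $\star d_A\star F$ terms carry coefficients $\half(-1)^n$ and $\half(-1)^{n+1}$, whose sum is $\half\left((-1)^n + (-1)^{n+1}\right) = 0$. Hence everything but $\half\tri\psi$ and the accumulated divergences drops out, leaving $\delta L = \half\tri\psi + \mbox{divergence}$, as claimed.

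I expect the only real obstacle to be the sign bookkeeping: verifying that the Hodge-adjoint exponent used for the bosonic term and the exponent $d + dp + s$ appearing in Snygg's identity really do differ by one, so that the two contributions are exact negatives rather than coincidentally equal. This reduces to checking that the sign $(-1)^n$ quoted in the bosonic proposition and the Dirac-operator sign $(-1)^{n+1}$ are computed from the same conventions for $d$, $p$, and $s$; once that single parity check is settled, the cancellation is forced, and no further division-algebra input is needed at this stage, since the division-algebra content has already been packaged entirely into $\tri\psi$.
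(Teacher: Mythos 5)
Your proposal is correct and follows exactly the paper's own route: combine the bosonic and fermionic variation propositions, then use the (covariant, $\g$-valued) generalization of Snygg's identity together with the Bianchi identity to get $\slashed{D}_A(F\epsilon) = (-1)^{n+1}(\star d_A \star F)\epsilon$, so the two curvature terms cancel. Your parity check $d + dp + s = 3(n+2)+1 \equiv n+1 \pmod 2$ is the same computation the paper performs, and the sign conventions do match as you anticipated.
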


The above result actually holds in every dimension, though our proof
used division algebras and was thus adapted to the dimensions of
interest: 3, 4, 6, and 10.  The next result is where division algebra
technology becomes really crucial:

\begin{prop}
\label{prop:variation2}
For Minkowski spacetimes of dimensions 3, 4, 6, and 10, 
$\tri \psi = 0$.  
\end{prop}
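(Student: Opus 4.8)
The plan is to reduce $\tri \psi = \langle \psi, (\epsilon \cdot \psi) \psi \rangle$ to the totally symmetric cubic $T$ of Theorem \ref{thm:cubic}, which vanishes. Two structural features make this possible: first, $\g$ is semisimple and carries an invariant inner product, so the structure constants with all indices lowered are totally antisymmetric; second, the spinor field $\psi$ is odd, so interchanging two copies of it costs a minus sign. The interplay of these two antisymmetries is what forces the relevant spinor trilinear to be symmetrized.

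First I would make the $\g$-dependence explicit. Choosing a basis $\{T_a\}$ of $\g$ and writing $\psi = \sum_a \psi^a\, T_a$ with odd components $\psi^a \in S_+$, the vector $\epsilon \cdot \psi$ becomes $\sum_a (\epsilon \cdot \psi^a)\, T_a$, and its action on the $\g$-valued spinor $\psi$ couples the Lie algebra factors through the bracket. Pairing the outcome with $\psi$ using the invariant inner product on $\g$ then yields
\[ \tri \psi = \sum_{a,b,d} f_{abd}\, \langle \psi^d, (\epsilon \cdot \psi^a) \psi^b \rangle, \]
where the $f_{abd}$ are totally antisymmetric. Next I would introduce the ordinary (commuting) trilinear form $M(\alpha,\beta,\gamma) = \langle \gamma, (\epsilon \cdot \alpha) \beta \rangle$ on $S_+$ and argue that, because the $\psi^a$ are odd, reordering them contributes exactly the signs needed so that the totally antisymmetric $f_{abd}$ contracts only against the totally \emph{symmetric} part of $M$ in its three arguments. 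Thus $\tri \psi$ depends on $M$ only through $\mathrm{Sym}\, M$.

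I would then compute $\mathrm{Sym}\, M$ in closed form. Using the defining property $\langle \gamma, A \beta \rangle = g(\gamma \cdot \beta, A)$ of the dot product together with the symmetry of $g$ and of $\cdot$, one rewrites $M(\alpha,\beta,\gamma) = g(\gamma \cdot \beta, \epsilon \cdot \alpha) = \langle \epsilon, (\beta \cdot \gamma) \alpha \rangle$. Averaging over the six permutations of $(\alpha,\beta,\gamma)$ and using $\beta \cdot \gamma = \gamma \cdot \beta$ collapses this to
\[ \mathrm{Sym}\, M(\alpha,\beta,\gamma) = \frac{1}{3} \langle \epsilon, T(\alpha,\beta,\gamma) \rangle, \]
with $T$ the map of Theorem \ref{thm:cubic}. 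Since $T = 0$, we conclude $\mathrm{Sym}\, M = 0$ and hence $\tri \psi = 0$.

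The main obstacle is the sign bookkeeping in the second step: confirming that the Grassmann-odd statistics of $\psi$, combined with the total antisymmetry of $f_{abd}$, genuinely project $M$ onto its fully symmetric part rather than onto some mixed-symmetry component. Once that is pinned down, the remainder is routine, resting only on the definition of the dot product, the symmetry of $g$, and the previously established identity $T = 0$ (which is itself the polarized form of the $3$-$\psi$'s rule, Theorem \ref{thm:fundamental_identity}).
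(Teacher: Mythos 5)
Your proposal is correct and follows essentially the same route as the paper: expand $\psi$ in a basis of $\g$, observe that the total antisymmetry of $\langle g_a,[g_b,g_c]\rangle$ combined with the odd statistics of the spinor components forces only the totally \emph{symmetric} part of the trilinear $\langle \gamma, (\epsilon\cdot\alpha)\beta\rangle$ to matter, move $\epsilon$ outside via the identity $\langle \gamma, (\epsilon\cdot\alpha)\beta\rangle = \langle\epsilon,(\beta\cdot\gamma)\alpha\rangle$, and invoke Theorem \ref{thm:cubic}. The only (harmless) difference is that you derive that identity from the defining property $g(\gamma\cdot\beta,A)=\langle\gamma,A\beta\rangle$ and the symmetry of $g$, whereas the paper verifies it by an explicit real-trace computation with the matrices of Table \ref{tab:intertwiners}.
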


\begin{proof} 
At each point, we can write 
\[ \psi = \sum \psi^a \tensor g_a, \]
where $\psi^a \in S_+$ and $g_a \in \g$. When we insert this into $\tri
\psi$, we see that
\[ \tri \psi = 
\sum \langle \psi^a, (\epsilon \cdot \psi^b) \psi^c \rangle 
\, \langle g_a, [g_b, g_c] \rangle. \]
Since $\langle g_a, [g_b, g_c] \rangle$ is totally antisymmetric, this
implies $\tri \psi = 0$ for all $\epsilon$ if and only if the part of
$\langle \psi^a, (\epsilon \cdot \psi^b) \psi^c \rangle$ that is
antisymmetric in $a$, $b$ and $c$ vanishes for all $\epsilon$. Yet
these spinors are odd; for even spinors, we require the part of
$\langle \psi^a, (\epsilon \cdot \psi^b) \psi^c \rangle$ that is
\emph{symmetric} in $a$, $b$ and $c$ to vanish for all $\epsilon$.

Now let us bring in some division algebra technology to remove our
dependence on $\epsilon$. While we do this, let us replace $\psi^a$
with $\psi$, $\psi^b$ with $\phi$, and $\psi^c$ with $\chi$ to lessen
the clutter of indices.  Substituting in the formulas from
Table~\ref{tab:intertwiners}, we have
\begin{eqnarray*}
	\langle \psi, (\epsilon \cdot \phi) \chi \rangle & = & \Re( \psi^\dagger (\widetilde{\epsilon \phi^\dagger + \phi \epsilon^\dagger}) \chi) \\
	& = & \Retr(\psi^\dagger (\epsilon \phi^\dagger + \phi \epsilon^\dagger - \epsilon^\dagger \phi - \phi^\dagger \epsilon) \chi) \\
	& = & \langle \epsilon, (\psi \cdot \chi) \phi \rangle,
\end{eqnarray*}
where again we have employed the cyclic symmetry of the real trace, along with
the identity:
\[ \tr(\epsilon \phi^\dagger + \phi \epsilon^\dagger) = 
\Retr(\epsilon \phi^\dagger + \phi \epsilon^\dagger) = 
\phi^\dagger \epsilon + \epsilon^\dagger \phi. \]
This real quantity commutes and associates in any expression.
So, if we seek to show that the part of $\langle \psi, (\epsilon \cdot \phi)
\chi \rangle$ that is totally symmetric in $\psi$, $\phi$ and $\chi$ vanishes
for all $\epsilon$, it is equivalent to show the totally symmetric part of
$(\phi \cdot \chi) \psi$ vanishes. And since the dot operation in $\phi \cdot
\chi$ is symmetric, this follows immediately from our main result,
Theorem \ref{thm:cubic}.  
\end{proof}

\subsection*{Acknowledgements}  

We thank Geoffrey Dixon, Tevian Dray and Corinne Manogue for helpful
conversations and correspondence.  We also thank An Huang, Theo
Johnson-Freyd, Greg Egan, and David Speyer for catching some errors.  This work
was partially supported by an FQXi grant.

\end{document}